\documentclass{article}

\usepackage{arxiv}

\usepackage[T1]{fontenc}
\usepackage[utf8]{inputenc}
\usepackage[english]{babel}
\usepackage{graphicx}
\usepackage{amsmath}
\usepackage{amsfonts}
\usepackage{hyperref}
\usepackage{fancyhdr}
\usepackage{lmodern}
\usepackage{cancel}
\usepackage{amssymb}
\usepackage{amsthm}
\usepackage{pgfplots}
\usepackage{accents}
\usepackage{bbm}
\usepackage{bm}
\usepackage{mathrsfs}
\usepackage{booktabs}
\usepackage{url}

\newcommand{\R}{\mathbb{R}}
\newcommand{\C}{\mathbb{C}}
\newcommand{\N}{\mathbb{N}}

\newcommand{\Ra}{\mathrm{Ra}}
\newcommand{\Da}{\mathrm{Da}}
\renewcommand{\Pr}{\mathrm{Pr}}

\renewcommand{\bar}{\overline}
\renewcommand{\rho}{\varrho}

\newcommand{\be}{\begin{equation}}
\newcommand{\ee}{\end{equation}}

\theoremstyle{plain}
\newtheorem{thm}{Theorem}[section]

\newtheorem{lem}{Lemma}[section]

\theoremstyle{definition}

\title{On the competition between rotation and variable viscosity in a Darcy-Brinkman model} 

\author{  
F. Capone\thanks{Corresponding author.} \href{https://orcid.org/0000-0002-0672-999X}{\includegraphics[scale=0.1]{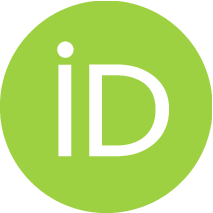}} \\ Dipartimento di Matematica e Applicazioni 'R.Caccioppoli' \\ Università degli Studi di Napoli Federico II \\ Via Cintia, Monte S.Angelo, 80126 Napoli \\ Italy \\ 
\texttt{fcapone@unina.it} \\
\And 
J.A. Gianfrani\href{https://orcid.org/0000-0001-9906-2495}{\includegraphics[scale=0.1]{orcid.eps}} \\ Research Centre for Fluid and Complex Systems \\ Coventry University \\ Priory St, Coventry CV1 5FB \\ United Kingdom \\   
\texttt{ae2688@coventry.ac.uk} }

\begin{document}
\maketitle

\begin{abstract}
In the present paper, the onset of thermal convection in a uniformly rotating Darcy-Brinkman porous medium saturated by a variable viscosity fluid is investigated and the competing interplay between rotation and temperature-dependent viscosity is then analysed. 
In literature it has been proved that considering a variable viscosity fluid saturating a porous medium does not produce any additional oscillating motions at the onset of convection. In this direction, similar results have been obtained in \cite{CaponeGentile2000}, where the authors prove the validity of the principle of exchange of stabilities in a rotating porous medium. Here, it is shown numerically that the combined effect of rotation and variable viscosity  may lead to the occurrence of oscillatory convection. This result is remarkable and not known in literature, to the best of our knowledge. The stabilising effect of rotation on the onset of convection is recovered, while it is shown that the effect of variable viscosity on the behaviour of the critical Rayleigh number depends on the choice of the Taylor number. Nonlinear analysis is performed via the energy method and sufficient condition for the stability of the basic solution is determined. Proximity of the results from the linear and nonlinear analyses is obtained thanks to numerical techniques, i.e. Chebyshev-$\tau$ method and golden section method.
\end{abstract}
\keywords{Porous Media \and Variable viscosity \and Oscillatory convection \and Conditional stability \and Rotating flow }

\section{Introduction}
Instability in rotating flows is a wide topic of research that has attracted the attention of many researchers over the years. Many mathematical studies have been performed to understand the physical mechanism that leads to the occurrence of certain flow structures. To mention a few examples, instability in rotating flows has been detected in cylindrical enclosures leading to the so-called centrifugal instability \cite{chandra, drazin, taylor1923, park2013stably, park2017instabilities},  but also in horizontal layers where a uniform temperature gradient is present across the layer \cite{hughes, vadasz2021centrifugal,straughan2023rotating, Gianfrani2023}. In this context, much of attention has been devoted to the occurrence of thermal instability in porous media and in particular to the mathematical consequences of the presence of the skew-symmetric term of Coriolis acceleration. When performing a nonlinear analysis via the energy theory, one has to pay attention to the choice of the Lyapunov functional given that a standard choice would lead to oversimplification of the problem, implying a loss of information from the skew-symmetric term. Therefore, alternative procedures have been developed during the years, to take into account the presence of the Coriolis term within the analysis \cite{CaponeGentile2000, Capone2020, vadasz1998, vadasz2019instability, gianfrani2020, capone2022thermal}.
In \cite{CaponeGentile2000}, the authors modelled a physical problem in a uniformly rotating porous medium, where the onset of oscillating motions are ruled out in the diffusive regime for Prandtl number greater than 1 ($\Pr\geq 1$). Moreover, the authors studied the nonlinear stability and managed to find an appropriate Lyapunov functional that could lead to the coincidence between the results coming from the linear and nonlinear stability analyses. These results are quite remarkable in the context of rotating flows in porous media given that the paper from Vadasz in 1998 \cite{vadasz1998} seemed to rule out the possibility of coincidence of results between the two analyses.
Moreover, from a physical point of view, the authors find confirmation of the delaying effect of axial rotation on the onset of thermal instability. This is a well-known mechanism for which fluid particles are pushed outward by centrifugal forces that oppose the buoyancy force, resulting in a delaying effect of rotation on the onset of convection. 

The opposite effect is witnessed by dynamic viscosity as a function of temperature such that viscosity is greater at lower temperatures and smaller at higher ones  \cite{torrance, straughan1986, kassoy1975, acta2022}. Viscosity is a measure of the ease of a fluid to move, therefore, once a downward gradient of temperature is imposed across a porous layer, fluid viscosity will be lower at the bottom, facilitating the action of the buoyancy force and the onset of thermal instability. The decreasing trend of viscosity with temperature is typical of liquids \cite[p. 253]{vafai, NB}, for which normally temperature dependence of viscosity is modelled with Arrhenius law. For many other liquids, experimental data are well-approximated by an exponential law of this type:
\begin{equation}\label{exp}
    \mu(T)= \mu_0 F(T) \quad \text{where } F(T)=\text{exp}(-\gamma T)
\end{equation}
$\mu, T$ being dynamic viscosity and temperature, respectively.
It is the case of engine oils, for which an exponential decrease of viscosity with rising temperatures occurs. In order to fit experimental data, some values of coefficients in Eq. \eqref{exp}  are shown in Table \ref{tab1}:
\begin{table}[htbp]
  \centering
  \begin{tabular}{|c|c|c|}
    \hline
    \text{Oil name} & $\mu_0$ [mPI] & $\gamma$ [$^\circ C^{-1}$] \\
    \hline
    SAE15W-40 & 1107.65 & 0.065  \\
    SAE5W-40 & 649.64 & 0.055 \\
    SAE10W-60 & 1250.60 & 0.057 \\
    \hline
  \end{tabular}
  \caption{Coefficients obtained from  experimental data available in \cite{website2}, regarding different engine oils.}
  \label{tab1}
\end{table}

Cooling liquids, such as oils or a solution of water and glycol are very often used within engines. The choice of glycol is based on the fact that this liquid lowers the freezing point of water and increases the boiling point, so that the fluid can work in a wider interval of temperatures (see data available in \cite{website1}). On the other hand, engine oils are well-suitable to get around high pressure issues in high temperature contexts and keep parts of the engine lubricated. Moreover, engine oils are responsible for a large percentage of the cooling that takes place within the engine components, where most of the heat is produced because of combustion and friction.
As oil passes through the engine, it is directed onto these hot components in order to carry the heat away to the oil sump, where, from here the heat is dissipated to the air surrounding the sump.
Therefore, within an engine, it is crucial that the cooling fluid is kept at the right viscosity so that it can circulate easily and cool down each component of the engine. For this reason, given that viscosity is affected by temperature, it is important to maintain the liquid at the optimal temperature for an efficient cooling process. At high temperatures, fluid viscosity is lower and the efficiency of the cooling system is higher.

In literature, thermal convection in variable viscosity fluids has been extensively studied since the works from \cite{torrance, joseph}. In \cite{rajagopal2009stability} the authors considered an exponential dependence of viscosity with respect to temperature and pressure. They studied the effect of exponential variations of viscosity on the onset of instability, proving the coincidence between linear and nonlinear analysis results. Whereas, in \cite{straughan2023nonlinear}, the author studied the case of a non-Newtonian fluid whose viscosity is a quadratic function of temperature. The interesting result lies in the fact that the nonlinear stability thresholds depend on the size of the initial
temperature perturbation.

The present paper is devoted to investigate the interplay between two competing effects: uniform axial rotation and  variable viscosity. The remarkable result coming from the linear analysis involves the lack of validity of the principle of exchange of stabilities as a consequence of the interaction between the two effects. There are numerous studies of thermal instability in porous media saturated by a variable viscosity fluid where the hypothesis of variable viscosity does not affect the type of motion originating at the onset. The reader can find some examples in \cite{straughan1986, acta2022, rajagopal2009stability}.
In the present paper, it has been proved numerically that the leading eigenvalues at the critical point may have nonzero imaginary part, leading to the occurrence of oscillatory convection and wave-like motion within the fluid, and this is due to the presence of nonzero viscosity gradient. To the best of our knowledge this result is not known in literature and may open new perspective regarding the interplay between competing hypotheses within the physical setup. Moreover, it has been possible to determine a non-trivial Lyapunov functional that leads to a stability theorem and stability thresholds that are very close to the linear ones, which represents the best result one could expect from a nonlinear energy analysis.

The plan of the paper is the following. Section \ref{sec1} is devoted to the explanation of the physical setup and determination of the mathematical model in its dimensional and dimensionless form. Modelling approximations are reported and fundamental dimensionless numbers are defined. In section \ref{sec2}, the linear stability analysis is performed and the numerical technique employed to solve the differential eigenvalue problem is introduced. Whereas, in section \ref{sec3}, the nonlinear stability analysis is reported. The main steps of the analysis are the determination of a suitable energy functional, the control of production and nonlinear term in order to recover the exponential decay in time of the total energy and finally the resolving the Euler-Lagrange equations to determine the critical Rayleigh number for nonlinear stability. Section \ref{sec4} is devoted to present numerical results from linear and nonlinear analyses, obtained by means of numerical techniques, i.e. Chebyshev-$\tau$ method and golden section method.

\section{Mathematical Model}\label{sec1}

Let $Oxyz$ be a Cartesian frame of reference where the $z$-axis is vertically upward. Let us consider a horizontal isotropic porous layer delimited by two impervious planes ($z=0$ and $z=d$) and saturated by a Newtonian fluid at rest  (see Fig. \ref{rotating_frame}). Planes confining the layer are kept at a constant temperature at any time in such a way that the fluid-saturated porous medium is heated from above. As a consequence, a uniform gradient of temperature is imposed and maintained constant across the medium. Let $T_L$ be the temperature on the
lower plane $z = 0$ delimiting the layer, and let $T_U$ be the temperature on the upper plane $z = d$. We assume the
layer to be heated from below, i.e. $T_L > T_U$.

Moreover, the layer is uniformly rotating about the vertical axis $z$ (see Fig. \ref{rotating_frame}). Therefore, we assume the reference frame to be the rotating frame in such a way that the modification of the Darcy's law includes a term due to Coriolis acceleration and a new pressure, called reduced pressure, that includes the centrifugal acceleration. As pointed out in \cite{vadasz1998}, not far from the axis of rotation, the centrifugal buoyancy can be neglected compared to the gravity buoyancy, limiting the effect of rotation to the Coriolis term.

In the present paper, based on previous considerations, we relax one point of the Oberbeck-Boussinesq approximation and consider a temperature-dependent fluid viscosity in the Darcy's law. 
The choice of which kind of functional dependence is more appropriate to consider is based on experimental evidence. For the problem at stake, Eq. \eqref{exp} is employed to model viscosity variations.

\begin{figure}
    \centering
    \includegraphics[scale=0.5]{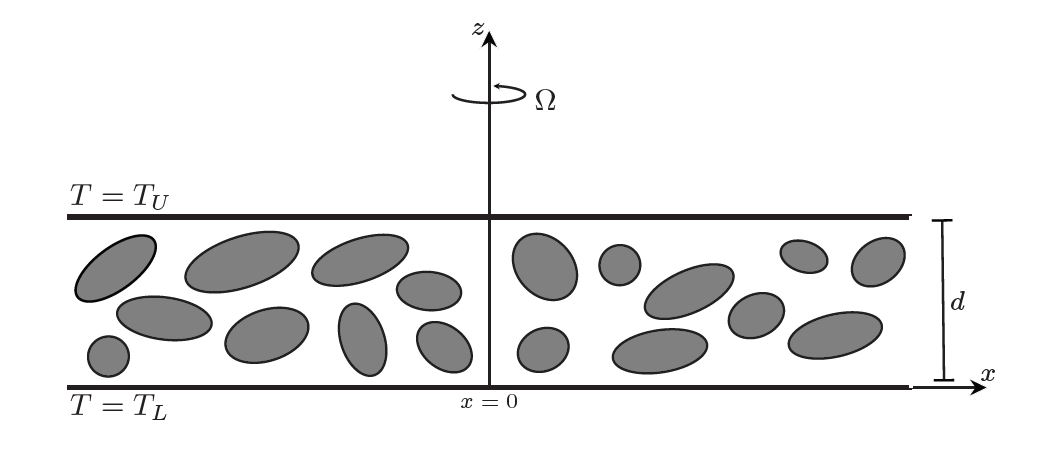}
    \caption{Rotating porous layer}
    \label{rotating_frame}
\end{figure}

Therefore, the mathematical model governing the evolution of velocity, temperature and pressure fields $(\textbf{v}, T, p)$ is, \cite{CaponeGentile2000, NB}:
\begin{equation}\label{mod}
		\begin{cases}
			\frac{\rho_f}{\varepsilon}  \textbf{v}_{,t} = -\nabla p^\prime + \rho_f \alpha g T \textbf{k} - \frac{\mu(T)}{K}\textbf{v}-\frac{2\rho_f \Omega}{\varepsilon}\textbf{k}\times\textbf{v}+\mu_0\Delta \textbf{v}\\
			\nabla \cdot \textbf{v}=0\\
			T_{,t} + \textbf{v} \cdot \nabla T =  \kappa \Delta T 
		\end{cases}
	\end{equation}
where $p^\prime = p-\frac{\rho_f}{2}| \Omega\textbf{k}-\textbf{x}|^2$ is the reduced pressure and $\varepsilon$, $\rho_f$, $c$ and $\kappa$ are medium porosity, density, specific heat and thermal diffusivity.

Boundary conditions attached to model \eqref{mod} are
\begin{equation}\label{bc1}
\begin{split}
    &T=T_L\quad \text{on } z=0   \\
    &T=T_U \quad \text{on } z=d\\ 
    &\textbf{v}\cdot \textbf{n}=0 \quad \text{on } z=0,d
\end{split}
\end{equation}
where Eq. \eqref{bc1}$_3$ models the impervious planes, being $\textbf{n}$ the outward unit normal to planes $z=0, d$. 

The previous differential problem admits steady conductive solution
\begin{equation}
    m_b = \{ \textbf{v}_b, T_b(z), p_b(z)\}
\end{equation}
where $\textbf{v}_b=\textbf{0}$, $T_b(z) = -\beta z +T_L$
and $p_b(z)=\rho_f\alpha g (-\beta \frac{z^2}{2}+T_L z)+cost$.

 In order to study the instability of this solution, we introduce a perturbation $\pi, \theta, \textbf{u}$ on pressure, temperature and velocity, respectively.
 The originating solution will be
 \begin{equation}
    \textbf{v} = \textbf{u} + \textbf{v}_b \qquad  T = \theta + T_b \qquad p^\prime = \pi + p_b
\end{equation}
   Upon plugging this solution into system \eqref{mod}, the following differential problem that governs the evolution of perturbation fields is obtained
\begin{equation}\label{syst}
\begin{cases}
			\frac{\rho_f}{\varepsilon}  \textbf{u}_{,t} = -\nabla \pi + \rho_f \alpha g \theta \textbf{k} - \frac{\mu(T_b+\theta)}{K}\textbf{u}-\frac{2\rho_f \Omega}{\varepsilon}\textbf{k}\times\textbf{u}+\mu_0\Delta \textbf{u}\\
			\nabla \cdot \textbf{u}=0\\
			\theta_{,t} + \textbf{u} \cdot \nabla \theta = \beta w+ \kappa \Delta \theta 
		\end{cases}
	\end{equation}
with boundary conditions:
\begin{equation}
    w=\theta=0 \quad \text{on } z=0,d
\end{equation}

A good compromise between mathematical tractability and still good approximation of the real phenomena involves  approximating the exponential function $F(T)$ with a Taylor series arrested at first order \cite{rajagopal2009stability}. In such a way, we get
 \begin{equation}
     F(T_b+\theta)= F(T_b)+\frac{dF}{d\xi}\biggr|_{\xi=T_b}\theta 
 \end{equation}
 By definition of the basic state 
 \begin{equation}
     F(T_b+\theta)= \tilde{f}(z)-\gamma \tilde{f}(z)\theta \quad \text{where } \tilde{f}(z)=\text{exp}(-\gamma (-\beta z+T_L)) 
 \end{equation}
Hence, Eq. \eqref{syst} becomes
\begin{equation}\label{syst2}
\begin{cases}
			\frac{\rho_f}{\varepsilon}  \textbf{u}_{,t} = -\nabla \pi + \rho_f \alpha g \theta \textbf{k} - \frac{\tilde\mu_0 f(z)}{K}\textbf{u} +\frac{\tilde\mu_0 \gamma f(z)}{K}\textbf{u}\theta-\frac{2\rho_f \Omega}{\varepsilon}\textbf{k}\times\textbf{u}+\mu_0\Delta \textbf{u}\\
			\nabla \cdot \textbf{u}=0\\
			\theta_{,t} + \textbf{u} \cdot \nabla \theta = \beta w+ \kappa \Delta \theta 
		\end{cases}
	\end{equation}
where, for the sake of simplicity, $f(z)=\text{exp}(\gamma\beta(z-d/2))$, $\tilde\mu_0 = \mu_0 \text{exp}(-\gamma(T_L+T_U)/2)$.

In Eq. \eqref{syst2}, the layer depth $d$ together with the following parameters
\begin{equation}\label{nondim2}
\begin{split}
  \tau=\dfrac{\rho_f d^2}{\tilde\mu_0 \varepsilon}, \quad U=\dfrac{\varepsilon \tilde\mu_0}{\rho_f d},\quad  P=\dfrac{\tilde\mu_0 U}{d} ,\quad T^{\#}=\dfrac{\beta d^2 U}{\kappa \Ra} .
\end{split}
\end{equation}
is used to scale the dimensional unknown fields.
Hence, the following set of variables in introduced, where the asterisks denote nondimensional fields
\begin{equation}\label{nondim1}
    \begin{split}
 &\textbf{x}=d\textbf{x}^*,\quad t=\tau t^*\\
 &\textbf{v}=U\textbf{v}^*,\quad \pi=P \pi^*,\quad T=T^{\#}T^*,
    \end{split}
\end{equation}
The resulting nondimensional version of Eq. \eqref{syst2} is
 \begin{equation}\label{modad}
\begin{cases}
			 \textbf{u}_{,t} = -\nabla \pi + \Ra \theta \textbf{k} - \frac{f(z)}{\Da}\textbf{u} + \frac{\hat\gamma f(z)}{ \Da} \theta \textbf{u} -\mathcal{T}\textbf{k}\times\textbf{u}+\Delta \textbf{u}\\
			\nabla \cdot \textbf{u}=0\\
			\Pr \theta_{,t} + \Pr\textbf{u} \cdot \nabla \theta = \Ra w+  \Delta \theta 
		\end{cases}
	\end{equation}
with $f(z)=\text{exp}(\Gamma(z-1/2))$ being $\Gamma=\gamma\beta d$ and
where $\hat\gamma=T^{\#} \gamma$ and
 \begin{equation}\label{def_Ra}
 \Ra=\sqrt{\dfrac{g\rho_f\alpha \beta d^4}{\tilde\mu_0\kappa}},\quad \Da=\dfrac{k}{d^2},\quad \mathcal{T}=\dfrac{2\Omega \rho_f d^2}{\varepsilon\tilde\mu_0}, \quad \Pr=\frac{\varepsilon\tilde\mu_0}{\rho_f\kappa}.
\end{equation}
 are the Rayleigh number, the Darcy number, the Taylor number and the Prandtl number, respectively.



By retaining the third component of the curl and the double curl of Eq. \eqref{modad}$_1$, it turns out that
 \begin{equation}\label{doublecurl}
\begin{cases}
			 -\Delta w_{,t} = -\Delta\Delta w - \Ra \Delta_1\theta + \frac{f(z)}{\Da}\Delta w +\frac{f'(z)}{\Da} w_{,z}   + \nabla\times\nabla\times(\frac{\hat\gamma f(z)}{ \Da} \theta \textbf{u})\cdot\textbf{k} +\mathcal{T}\omega_{,z}\\
			\omega_{,t} =\Delta \omega -\frac{f(z)}{\Da}\omega+\mathcal{T}w_{,z} + \nabla\times(\frac{\hat\gamma f(z)}{ \Da} \theta \textbf{u})\cdot\textbf{k} \\
			\Pr \theta_{,t} + \Pr\textbf{u} \cdot \nabla \theta = \Ra w+  \Delta \theta 
		\end{cases}
	\end{equation}
 with boundary conditions
 \begin{equation}
    w=\theta=\omega_{,z}=0 \quad \text{on } z=0,1
\end{equation}

We shall now assume that perturbations are periodic in the $x$ and $y$ direction with periods $\frac{2\pi}{k_x}$ and $\frac{2\pi}{k_y}$, respectively. Moreover, $\forall g\in\{\textbf{u}, \omega, \theta\}$
\begin{equation}
    g \ : (\textbf{x},t) \in \Omega\times \R^+ \rightarrow g(\textbf{x},t)\in\R \ \text{and } g\in W^{3,2}(\Omega) \ \forall t\in\R^+,
\end{equation}
where $\Omega$ is the periodicity cell
\begin{equation}
    \Omega = \left[0,\frac{2\pi}{k_x}\right]\times\left[0,\frac{2\pi}{k_y}\right]\times[0,1]
\end{equation}
and $g$ can be expanded in a Fourier series uniformly convergent in $\Omega$.

\section{Linear analysis}\label{sec2}
The present section addresses the onset of stationary and oscillatory convection via a linear stability analysis of the conduction solution. 
The linear version of \eqref{doublecurl} is the following
 \begin{equation}\label{linsys}
\begin{cases}
			 -\Delta w_{,t} = -\Delta\Delta w - \Ra \Delta_1\theta + \frac{f(z)}{\Da}\Delta w +\frac{f'(z)}{\Da} w_{,z}  +\mathcal{T}\omega_{,z}\\
			\omega_{,t} = \Delta \omega -\frac{f(z)}{\Da}\omega+\mathcal{T}w_{,z}  \\
			\Pr \theta_{,t} = \Ra w+  \Delta \theta 
		\end{cases}
\end{equation}
with boundary conditions
 \begin{equation}
    w=\theta=\omega_{,z}=0 \quad \text{on } z=0,1
\end{equation}

Since system \eqref{linsys} is autonomous and perturbations are assumed periodic in $x$ and $y$ directions of periods $\tfrac{2\pi}{k_x}$ and $\tfrac{2\pi}{k_y}$ respectively, solution of Eq. \eqref{linsys} can be written as superposition of modes 
\begin{equation}\label{visc_expansion}
	\varphi(t,x,y,z) = \sum_{n=1}^{+\infty} \bar{\varphi}_n(x,y,z) e^{\sigma t} \quad (\sigma\in\C)  \quad \forall \varphi \in \{w, \omega, \theta \}
\end{equation}
where 
\begin{equation}
	\Delta_1 \bar{\varphi}_n(x,y,z) = - k^2 \bar{\varphi}_n(z)  \quad\qquad k^2=k^2_x + k^2_y
\end{equation}
and $\sigma$ is the temporal growth rate and $k^2$ is the wavenumber of the perturbation.

According to the linear theory, the basic conduction solution is unstable when the real part of $\sigma$ is positive ($Re(\sigma)>0$). A primary instability occurs leading to the origin of a new steady state if, at the onset, $Im(\sigma)=0$. Otherwise, a wave-like motion emerges within the fluid and the so-called oscillatory convection occurs.

Upon substituting Eq. \eqref{visc_expansion} into Eq. \eqref{linsys} and retaining only the $n$-th by virtue of linearity of the system, the following differential eigenvalue problem is obtained:
\begin{equation}\label{modad2}
\begin{cases}
    0 =(D^2-k^2) w -\Phi\\
    \sigma \Phi = (D^2-k^2)\Phi -P_0f(z)\Phi -P_0 f(z) \Gamma Dw-\Ra  k^2 \theta - \mathcal{T}  D\omega\\
    \sigma  \omega = (D^2-k^2) \omega -P_0 f(z) \omega +\mathcal{T}  D w\\
    \Pr \sigma \theta =\Ra w + (D^2-k^2)\theta
\end{cases} 
\end{equation}
where  $D:=\frac{d}{dz}$ and $P_0=\Da^{-1}$, with boundary conditions
 \begin{equation}\label{bc5}
    w=\Phi=\theta=D\omega=0 \quad \text{on } z=0,1
\end{equation}
Hence, Eqs. \eqref{modad2}-\eqref{bc5} represent a differential eigenvalue problem of this kind
\begin{equation}
    \mathcal{A}\textbf{X}=\sigma\mathcal{B}\textbf{X} \qquad \textbf{X}=(w,\Phi,\omega,\theta).
\end{equation}

The presence of non-constant coefficients within the model makes a pure analytical procedure very complex. Therefore, in order to determine the eigenvalue $\sigma$ and consequently the locus of couples $(k,\Ra^2)$ where $Re(\sigma)=0$,  a Chebyshev-$\tau$ method has been implemented. 
Details on the method are provided in Appendix \ref{appendix1}. For a further discussion the reader is referred to \cite{dongarra, arnone2023chebyshev, bourne2003,arnonegianfrani}.

 \section{Nonlinear analysis}\label{sec3}
In the present section, using the well-established energy method \cite{straughan2013, Galdi1985, galdi1985exchange, straughan2006global, barletta, gianfrani2022}, and following guidelines provided in \cite{CaponeGentile2000, RioneroMulone1987}, it is reported a nonlinear stability analysis upon restriction on initial data of the total energy associated to the nonlinear system under the spotlights. This analysis leads to a fundamental stability theorem by which it is possible to claim that perturbations on the conduction solution do not trigger convection for low values of the Rayleigh number, provided that a constraint on the initial datum of the energy is verified. Moreover, the critical Rayleigh number for nonlinear stability is determined, showing good agreement with the results coming from the linear analysis. The proximity of the results from the two analyses confirms the validity of the Lyapunov functional chosen to undertake the nonlinear analysis.  
 
\subsection{Essential variables and Lyapunov functional}
In the linear stability analysis for the conductive solution, the variables employed are $w,\theta,\omega$, perturbation fields on the vertical component of the seepage velocity, temperature and third component of flow vorticity, respectively. Following the nomenclature provided in \cite{RioneroMulone1987}, these variables are called essential variables.

 In order to study the nonlinear stability of the same solution, we need to introduce a new essential variable. This new variable is a combination of vorticity and temperature, modelling a balance between those variables who promote or inhibit instability. Following the choice done in \cite{CaponeGentile2000}, we define
 \begin{equation}
     \varphi = \Ra\Delta_1\theta -\mathcal{T} \omega_{,z}
 \end{equation}
 It turns out that the nonlinear system governing the evolution of perturbations is 
 \begin{equation}\label{nonlinsystem}
\begin{cases}
			 -\Delta w_{,t} = -\Delta\Delta w + \frac{f(z)}{\Da}\Delta w +\frac{f'(z)}{\Da} w_{,z}   + \nabla\times\nabla\times(\frac{\hat\gamma f(z)}{ \Da} \theta \textbf{u})\cdot\textbf{k} -\varphi\\
    \varphi_{,t} = q\Ra^2 \Delta_1 w + q\Ra \Delta\Delta_1\theta -\mathcal{T}\Delta \omega_{,z}-\mathcal{T}^2w_{,zz}+ \frac{\mathcal{T}}{\Da}( f \omega_{,z} +f^\prime \omega) \pm \Ra\Delta\Delta_1\theta\\
    \qquad\qquad\qquad\qquad\qquad\qquad\qquad\qquad-\Ra\Delta_1(\textbf{u}\cdot\nabla\theta)-\mathcal{T} \partial_{,z}\nabla\times(\frac{\hat\gamma f(z)}{ \Da} \theta \textbf{u})\cdot\textbf{k}\\
			\omega_{,t} =\Delta \omega -\frac{f(z)}{\Da}\omega+\mathcal{T}w_{,z} + \nabla\times(\frac{\hat\gamma f(z)}{ \Da} \theta \textbf{u})\cdot\textbf{k} \\
			\Pr \Delta_1\theta_{,t} + \Pr\Delta_1(\textbf{u} \cdot \nabla \theta) = \Ra \Delta_1w+  \Delta\Delta_1 \theta 
		\end{cases}
	\end{equation}
 where $q=\Pr^{-1}$, with boundary conditions:
 \begin{equation}\label{bcNL}
     w=\omega_{,z}=\theta=\Delta_1\theta=\theta_{,zz}=\varphi=0 \ \text{on } z=0,1
 \end{equation}

By denoting with $(\cdot,\cdot)$ and $\|\cdot\|$, respectively, the scalar product and the associated norm in $L^2(V)$, one can obtain the evolution equation of the total energy associated to system  \eqref{nonlinsystem}-\eqref{bcNL}. With this aim, let us introduce the following Lyapunov functional 
 \begin{equation}
    E=E_1+bE_2
\end{equation}
where 
 \begin{equation}
     E_1 = \frac{1}{2} \{\lambda_1 \|\nabla w\|^2 + \|\varphi\|^2 +\mu\Pr\|\Delta_1\theta\|^2+\lambda_2\|\omega\|^2\}
 \end{equation}
 and \begin{equation}
    E_2= \frac{1}{2} \{\chi\|\nabla\textbf{u}\|^2 + \Pr \|\Delta\theta\|^2 +\|\nabla\omega\|^2+\|\textbf{u}\|^2+\|\omega\|^2+ \chi \|\Delta \textbf{u}\|^2\}
\end{equation}
where $\lambda_1, \lambda_2, \mu$ are three coupling parameters to be optimally chosen later, while $\chi$ is a constant that will be conveniently determined in due course.

\subsection{Control of production and nonlinear terms}
The evolution of the total energy $E$ is governed by the following equation
\begin{equation}\label{ineq1}
    \frac{dE}{dt}= I_1-D_1+N_1+bI_2 -bD_2 +bN_2\leq -D_1(1-m) +bI_2-bD_2+N_1+bN_2
\end{equation}
where by definition
\begin{equation}\label{varprob}
     m=\max_{\mathcal{H}} \mathcal{F}, \qquad \text{being } \mathcal{F}= \frac{I_1}{D_1}
 \end{equation} 
 where the space of kinematically admissible perturbations $\mathcal{H}$ is defined as:
\begin{equation}
\begin{split}
  \mathcal{H}=\Bigl\{(\textbf{u},\theta,\omega,\varphi)\in W^{3,2}(V)\;&|\;x,y\;\text{periodic} \ \text{with period }2\pi/k_x,2\pi/k_y,\;\text{verifying (\ref{bcNL})}\Bigr\} ,  
\end{split}
\end{equation}
and where the following production, dissipative and nonlinear terms are determined from the nonlinear system \eqref{nonlinsystem}, upon suitable scalar multiplication in $L^2(V)$, 
 \begin{equation}
\begin{aligned}
     &
     \begin{split}
         I_1 & = -\lambda_1 (\varphi,w) + q\Ra^2 (\Delta_1 w,\varphi) -(\Ra(1-q) \Delta\Delta_1 \theta, \varphi) -\mathcal{T}^2(w_{,zz},\varphi) +\frac{\mathcal{T}}{\Da}(\varphi,f\omega_{,z}+f^\prime\omega)\\ &\quad +\mu\Ra(\Delta_1w, \Delta_1\theta)+\lambda_2\mathcal{T}(w_{,z},\omega)
     \end{split}\\
     & D_1 = \lambda_1 \|\Delta w\|^2 + \frac{\lambda_1}{\Da}\int_V f |\nabla w|^2 + \|\nabla \varphi\|^2 + \mu \|\nabla \Delta_1 \theta\|^2 + \lambda_2 \|\nabla\omega\|^2 + \frac{\lambda_2}{\Da} \int_V f|\omega|^2\\
     & 
     \begin{split}
         N_1 & = \lambda_1 (\nabla\times\nabla\times(\frac{\hat\gamma f(z)}{ \Da} \theta \textbf{u})\cdot\textbf{k}, w) - \mu \Pr (\Delta_1(\textbf{u}\cdot \nabla\theta), \Delta_1\theta) -(\Ra\Delta_1(\textbf{u}\cdot\nabla\theta),\varphi)\\
     &\quad -(\mathcal{T} \partial_{,z}\nabla\times(\frac{\hat\gamma f(z)}{ \Da} \theta \textbf{u})\cdot\textbf{k},\varphi)+\lambda_2 (\nabla\times(\frac{\hat\gamma f(z)}{ \Da} \theta \textbf{u})\cdot\textbf{k},\omega)
     \end{split}
 \end{aligned}
 \end{equation}
 and
\begin{equation}
    \begin{aligned}
        & 
        \begin{split}
            I_2 & =-\chi P_0\int_V f^\prime \textbf{u}\cdot \textbf{u}_{,z} -\chi(\Ra\theta, \Delta w) + 2(\Ra\Delta w, \Delta\theta) \\ & \quad -P_0\int_V f^\prime \omega\omega_{,z} + \Ra(\theta,w)+ {\color{black}\chi P_0 \int_V f^\prime \textbf{u}\nabla\Delta\textbf{u}+\chi  P_0 \int_V f\nabla\textbf{u}\nabla \Delta\textbf{u}}
        \end{split}\\
        & 
        \begin{split}
            D_2 & = \chi\|\Delta \textbf{u}\|^2+\int_V \chi P_0 f |\nabla \textbf{u}|^2 + \|\nabla\Delta\theta\|^2+ \|\Delta\omega\|^2+\int_V P_0 f |\nabla \omega|^2+\|\nabla \textbf{u}\|^2 \\ & \quad +\int_V P_0 f | \textbf{u}|^2 + \|\nabla\omega\|^2+\int_V P_0 f |\omega|^2 + \chi \|\nabla\Delta\textbf{u}\|^2
        \end{split}\\
        & 
        \begin{split}
            N_2 & =  \chi P_0\hat\gamma (f\theta \textbf{u},\Delta \textbf{u})-\Pr(\Delta(\textbf{u}\cdot\nabla\theta),\Delta\theta) + P_0(\nabla\times(\hat\gamma f(z) \theta \textbf{u})\cdot\textbf{k},\Delta\omega)   \\
            & \quad +P_0\hat\gamma(f\theta\textbf{u},\textbf{u})+P_0(\nabla\times(\hat\gamma f(z) \theta \textbf{u})\cdot\textbf{k},\omega) {\color{black}- \chi (f^\prime \theta \textbf{u}, \nabla\Delta\textbf{u})-\chi (f\nabla\theta\textbf{u},\nabla\Delta\textbf{u})-\chi (f\theta \nabla\textbf{u}, \nabla\Delta\textbf{u})}
        \end{split}
    \end{aligned}
\end{equation}

In order to reshape inequality \eqref{ineq1}, the following Lemma has to be proved: 

\begin{lem}\label{lemma1}
    For any perturbation $\theta$ in the space of kinematically admissible functions $\mathcal{H}$, the following inequality holds:
    \begin{equation}\label{ineq}
        \|\nabla\theta\|^2 \leq c_P^2\|\nabla\Delta\theta\|^2
    \end{equation}
\end{lem}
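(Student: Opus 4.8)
The estimate \eqref{ineq} is a Poincar\'e-type (compact-embedding) inequality on the slab $V$, and the plan is to derive it from two integrations by parts combined with the classical one-dimensional Poincar\'e inequality in the $z$-variable. The two structural facts that make the argument work are read off directly from the boundary conditions \eqref{bcNL}: first, $\theta$ vanishes on $z=0,1$ and is periodic in $x,y$; second, since \emph{both} $\Delta_1\theta$ and $\theta_{,zz}$ vanish on $z=0,1$, the full Laplacian $\Delta\theta=\Delta_1\theta+\theta_{,zz}$ \emph{also} vanishes there. Consequently $\theta$ and $\Delta\theta$ are both admissible in the scalar Poincar\'e inequality $\|\psi\|\le\pi^{-1}\|\nabla\psi\|$, valid for every $\psi\in W^{1,2}(V)$ that is periodic in $x,y$ and vanishes on $z=0,1$ (with $\pi^2$ the first Dirichlet eigenvalue of $-d^2/dz^2$ on $(0,1)$).

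I would then proceed as follows. Integrating by parts in $V$ and discarding the boundary terms --- which drop out by periodicity in $x,y$ and by $\theta|_{z=0,1}=0$ --- gives $\|\nabla\theta\|^2=-(\theta,\Delta\theta)\le\|\theta\|\,\|\Delta\theta\|$; applying Poincar\'e to $\psi=\theta$ turns this into $\|\nabla\theta\|\le\pi^{-1}\|\Delta\theta\|$. Applying Poincar\'e a second time, now to $\psi=\Delta\theta$ (legitimate precisely because of the second structural fact above), gives $\|\Delta\theta\|\le\pi^{-1}\|\nabla\Delta\theta\|$, and combining the two bounds yields $\|\nabla\theta\|^2\le\pi^{-4}\|\nabla\Delta\theta\|^2$, i.e. \eqref{ineq} with $c_P^2=\pi^{-4}$. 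Equivalently, and perhaps more transparently, one may expand $\theta$ in the orthogonal system $\{\sin(n\pi z)\,e^{i(\ell k_x x+m k_y y)}\}$ permitted by \eqref{bcNL} and the periodicity assumptions; writing $a=n^2\pi^2+\ell^2k_x^2+m^2k_y^2\ge\pi^2$ for the associated eigenvalues of $-\Delta$, Parseval gives $\|\nabla\theta\|^2=\sum a|c|^2$ and $\|\nabla\Delta\theta\|^2=\sum a^3|c|^2$, whence the same bound with the same constant.

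Since this is a standard auxiliary estimate, there is no serious obstacle; the one point that genuinely needs attention is the one flagged above, namely that the \emph{second} application of Poincar\'e is only permissible because $\Delta\theta$ --- not merely $\theta$ --- vanishes on $z=0,1$, which in turn uses the full list of conditions in \eqref{bcNL} (in particular $\Delta_1\theta=\theta_{,zz}=0$ there) rather than only $\theta=0$. Everything else is routine manipulation justified by the regularity $\theta\in W^{3,2}(V)$ built into the definition of $\mathcal{H}$.
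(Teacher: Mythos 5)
Your proposal is correct and follows essentially the same route as the paper: integrate by parts to get $\|\nabla\theta\|^2=-(\theta,\Delta\theta)$ and then apply the Poincar\'e inequality twice, once to $\theta$ and once to $\Delta\theta$ (the paper absorbs the $\|\nabla\theta\|^2$ term via Young's inequality with $\epsilon=1/c_P$ rather than cancelling a factor after Cauchy--Schwarz, which is the same computation). Your explicit observation that the second application of Poincar\'e is legitimate only because $\Delta\theta=\Delta_1\theta+\theta_{,zz}$ vanishes on $z=0,1$ by the full set of conditions \eqref{bcNL} is exactly the point the paper leaves implicit under ``by virtue of periodicity and boundary conditions.''
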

\begin{proof}
    By virtue of periodicity and boundary conditions \eqref{bcNL}, the generalised Young and Poincar\'e inequalities, it turns out that:
    \begin{equation}
        \|\nabla\theta\|^2 \leq |(\theta,\Delta\theta)|\leq \epsilon \frac{\|\theta\|^2}{2} + \frac{\|\Delta\theta\|^2}{2\epsilon} \leq \epsilon c_P \frac{\|\nabla\theta\|^2}{2} + c_P\frac{\|\nabla\Delta\theta\|^2}{2\epsilon}
    \end{equation}
    Therefore, Eq.\eqref{ineq} follows upon choosing $\epsilon=\frac{1}{c_P}$.
\end{proof}

Consequently, the following Lemma holds:
\begin{lem}
    Under boundary conditions Eq. \eqref{bcNL}, for any choice of kinematically admissible perturbations in $\mathcal{H}$, the production term $I_2$ is dominated by dissipation terms.\\
    In particular:
\begin{equation}
    bI_2\leq  D_3
\end{equation}
where \begin{equation}\label{d3}
    D_3= \frac{1-m}{2}D_1 + \frac{b}{2}D_2
\end{equation}
\end{lem}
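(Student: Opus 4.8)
The plan is to bound each of the seven terms comprising $I_2$ individually by absorbing them into the dissipation terms $D_1$ and $D_2$, and then to choose the constant $\chi$ (and if needed constrain $b$) so that the accumulated bounds fit inside the right-hand side $D_3 = \frac{1-m}{2}D_1 + \frac{b}{2}D_2$. First I would treat the terms that are quadratic in the "linear" perturbation fields, namely $-\chi(\Ra\theta,\Delta w)$, $2(\Ra\Delta w,\Delta\theta)$, and $\Ra(\theta,w)$: each of these is handled by Cauchy--Schwarz followed by a Young inequality with a free parameter, using Poincar\'e-type inequalities (and Lemma~\ref{lemma1} for the $\|\nabla\theta\|$-to-$\|\nabla\Delta\theta\|$ control) to convert $\|\theta\|$, $\|w\|$ into $\|\nabla\Delta\theta\|$, $\|\Delta w\|$, so that they land inside the $\lambda_1\|\Delta w\|^2$, $\mu\|\nabla\Delta_1\theta\|^2$ pieces of $D_1$ and the $\|\nabla\Delta\theta\|^2$, $\|\Delta\textbf{u}\|^2$ pieces of $D_2$. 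The cross term $2(\Ra\Delta w,\Delta\theta)$ is the one that genuinely needs both $D_1$ and $D_2$ to absorb it, which is precisely why the statement allows a combination of the two dissipations.

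Next I would handle the three Brinkman-type integral terms involving $f$ and $f'$: $-\chi P_0\int_V f'\,\textbf{u}\cdot\textbf{u}_{,z}$, $-P_0\int_V f'\,\omega\omega_{,z}$, and the two higher-order pieces $\chi P_0\int_V f'\,\textbf{u}\cdot\nabla\Delta\textbf{u} + \chi P_0\int_V f\,\nabla\textbf{u}\cdot\nabla\Delta\textbf{u}$. For these I would use that $f$ and $f'=\Gamma f$ are smooth and strictly positive on $[0,1]$, hence $f_m \le f(z) \le f_M$ and $|f'(z)| \le \Gamma f_M$ for explicit constants; then Cauchy--Schwarz and Young give, e.g., $|{-}\chi P_0\int_V f'\textbf{u}\cdot\textbf{u}_{,z}| \le \chi P_0 \Gamma f_M \,\|\textbf{u}\|\,\|\nabla\textbf{u}\| \le$ (small multiple of) $\chi P_0\int_V f|\textbf{u}|^2 + \chi P_0\int_V f|\nabla\textbf{u}|^2$, both of which appear verbatim in $D_2$. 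The term $\chi P_0\int_V f'\textbf{u}\cdot\nabla\Delta\textbf{u}$ is controlled by $\|\textbf{u}\|\,\|\nabla\Delta\textbf{u}\|$ and absorbed into $\chi\|\nabla\Delta\textbf{u}\|^2 + \int_V P_0 f|\textbf{u}|^2$, while $\chi P_0\int_V f\nabla\textbf{u}\cdot\nabla\Delta\textbf{u}$ goes into $\chi\|\nabla\Delta\textbf{u}\|^2 + \int_V \chi P_0 f|\nabla\textbf{u}|^2$; here the factor $\chi$ appears on both the term and its dissipative counterpart, so $\chi$ can be pulled out and these go through for any $\chi>0$.

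The bookkeeping step is then to collect all the Young-inequality parameters and demand that the total coefficient in front of each distinct dissipative monomial on the left does not exceed $\frac{1-m}{2}$ times its coefficient in $D_1$, respectively $\frac{b}{2}$ times its coefficient in $D_2$. Since $m<1$ on the stability boundary (this is exactly the regime in which the analysis is run), $\frac{1-m}{2}>0$, and one has enough free parameters — the Young $\epsilon$'s — to make every such inequality hold; the terms that only involve $D_2$-monomials and carry an overall factor $b$ match automatically after dividing by $b$, and the genuinely mixed term $2(\Ra\Delta w,\Delta\theta)$ is split, say, half into $D_1$ and half into $bD_2$ with $\epsilon$ chosen accordingly. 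I expect the main obstacle to be precisely this splitting of the mixed term $2(\Ra\Delta w,\Delta\theta)$: one must show the $\Ra^2$-dependent constant it produces can be simultaneously swallowed by $\lambda_1\|\Delta w\|^2$ (part of $D_1$) and $\|\nabla\Delta\theta\|^2$ (part of $D_2$) without clashing with the demands already placed on those same monomials by $-\chi(\Ra\theta,\Delta w)$ and $\Ra(\theta,w)$ — this forces a joint constraint linking $\lambda_1$, $\mu$, $\chi$ and $b$, and one must verify it is consistent (e.g.\ by taking $\chi$ large enough, or $b$ small enough, after $\lambda_1,\lambda_2,\mu$ have been fixed by the earlier optimisation of $m$). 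Once that consistency is checked, \eqref{d3} follows by summing the individual bounds.
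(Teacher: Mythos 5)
Your proposal follows essentially the same route as the paper's proof: term-by-term Cauchy--Schwarz/Young/Poincar\'e estimates (with Lemma \ref{lemma1} supplying the $\|\nabla\theta\|\le c_P\|\nabla\Delta\theta\|$ control and $\widehat{m}\le f\le \widehat{M}$, $f'=\Gamma f$ handling the Brinkman terms), absorption of each piece into the monomials of $D_1$ or $D_2$, a specific choice of $\chi$ and of the Young parameter $\epsilon$ forcing the total $D_2$-coefficient down to $1/2$ (which is the part that must be achieved independently of $b$, since $b$ multiplies both $I_2$ and the $\tfrac{b}{2}D_2$ target), and finally $b:=(1-m)/\bar{K}$ to swallow the arbitrarily large $D_1$-constant. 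The one directional slip is the suggestion of taking ``$\chi$ large enough'': several $I_2$-terms proportional to $\chi$ must be absorbed by $\chi$-free pieces of $D_2$, so the paper in fact fixes $\chi$ \emph{small} (the reciprocal of an explicit bracket); your fallback of shrinking $b$ afterwards is exactly the mechanism the paper uses, but only for the $D_1$ side.
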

\begin{proof}

Given that $f(z)\in C^0([0,1])$, by virtue of Weierstrass theorem, we can define $\widehat{M}=\max_{[0,1]} f(z)$ and $\widehat{m}=\min_{[0,1]} f(z)$. By virtue of the Cauchy-Schwarz inequality, the arithmetic-geometric mean inequality, the generalised Young inequality, the Poincar\'e inequality and Lemma \ref{lemma1} the following chain of inequalities holds, where the absolute value of scalar products has been omitted out of simplification:
\begin{equation}\label{eq33}
\begin{aligned}
    & 
    \begin{split}
        I_2 \leq |I_2| \leq  & \ \chi P_0\int_V f^\prime \textbf{u}\cdot \textbf{u}_{,z} +\chi(\Ra\theta, \Delta w) + (1+\chi)(\Ra\Delta w, \Delta\theta) +P_0\int_V f^\prime \omega\omega_{,z} + \Ra(\theta,w) \\ &  +\chi P_0 \int_V f^\prime \textbf{u}\nabla\Delta\textbf{u}+ \chi P_0 \int_V f\nabla\textbf{u}\nabla \Delta\textbf{u}
    \end{split} \\
    & \begin{split}
        \leq & \ \chi \Gamma \widehat{M} P_0 \int_V \textbf{u}\cdot \textbf{u}_{,z} + \Gamma \widehat{M} P_0 \int_V \omega\omega_{,z} + \chi\Ra ( w, \Delta\theta) + (1+\chi)\Ra (\Delta w, \Delta \theta) + \frac{\|w\|^2}{2\epsilon}+ \Ra^2c_P\epsilon\frac{\|\nabla\theta\|^2}{2} \\& + \chi P_0\Gamma \widehat{M}\int_V\textbf{u}\nabla\Delta\textbf{u}+\chi P_0\widehat{M} \int_V \nabla\textbf{u}\nabla\Delta\textbf{u}
    \end{split} \\
    & \leq \chi \Gamma \widehat{M} P_0 \left(\frac{\|\textbf{u}\|^2}{2}+\frac{\|\nabla \textbf{u}\|^2}{2}\right)+\Gamma \widehat{M} P_0 \left(\frac{\|\omega\|^2}{2}+\frac{\|\nabla \omega\|^2}{2}\right) + \chi^2\Ra^2c_P\epsilon\frac{\|\nabla\Delta\theta\|^2}{2}+ \frac{\|w\|^2}{2\epsilon}\\ & \quad + \Ra^2\epsilon\frac{\|\Delta\theta\|^2}{2} + \frac{\|\Delta w\|^2}{2\epsilon} +\chi^2\epsilon\Ra^2\frac{\|\Delta\theta\|^2}{2} + \frac{\|\Delta w\|^2}{2\epsilon} +\Ra^2c_P^3\epsilon\frac{\|\nabla\Delta\theta\|^2}{2}+ \frac{\|w\|^2}{2\epsilon}\\
    & \quad + \chi P_0\Gamma\widehat{M}\left(\frac{\|\textbf{u}\|^2}{2}+\frac{\|\nabla\Delta\textbf{u}\|^2}{2}\right)+\chi P_0\widehat{M}\left(\frac{\|\nabla\textbf{u}\|^2}{2}+\frac{\|\nabla\Delta\textbf{u}\|^2}{2}\right)\\
    & \leq \chi\Gamma \widehat{M} P_0 \left(\frac{1}{P_0\widehat{m}}+1\right)\frac{D_2}{2}+\Gamma \widehat{M} P_0 \left(\frac{1}{P_0\widehat{m}}+1\right)\frac{D_1}{2\lambda_2} +
    \chi^2\epsilon\Ra^2c_P \frac{D_2}{2}+\frac{1}{P_0\widehat{m}}\frac{D_1}{2c_P\lambda_1\epsilon } \\ & \quad+ \epsilon\Ra^2c_P \frac{D_2}{2}+\frac{D_1}{2\lambda_1\epsilon}+\chi^2\epsilon\Ra^2c_P \frac{D_2}{2}+\frac{D_1}{2\lambda_1\epsilon} +\epsilon\Ra^2  c_P^3\frac{D_2}{2}+ \frac{1}{P_0\widehat{m}}\frac{D_1}{2c_P\lambda_1\epsilon}\\
    &
    \quad +\chi P_0\Gamma\widehat{M}\left(c_P+1\right)\frac{D_2}{2}+2\chi P_0\widehat{M}\frac{D_2}{2}\\
    & = \chi\Gamma \widehat{M} P_0 \left(\frac{1}{P_0\widehat{m}}+2+c_P+\frac{2}{\Gamma}\right) \frac{D_2}{2} + \frac{D_1}{2} K + \epsilon\Ra^2c_P(2\chi^2+1+c_P^2) \frac{D_2}{2}
\end{aligned}
\end{equation}
where the constant $K$ is:
\begin{equation}
    K= \Gamma \widehat{M} P_0 \left(\frac{1}{P_0\widehat{m}}+1\right)\frac{1}{\lambda_2} + \frac{2}{\epsilon P_0 c_P\lambda_1\widehat{m}} + \frac{2}{\epsilon\lambda_1}
\end{equation}
Hence, upon choosing
\begin{equation}
    \chi=\left[2\Gamma \widehat{M} P_0 \left(\frac{1}{P_0\widehat{m}}+2+c_P+\frac{2}{\Gamma}\right)\right]^{-1}
\end{equation}
from the last inequality in \eqref{eq33}, it follows:
\begin{equation}\label{eq45}
    I_1 \leq \frac{D_2}{4}+\frac{D_1}{2}K+ \epsilon\Ra^2c_P(2\chi^2+1+c_P^2) \frac{D_2}{2} = \frac{D_2}{2} + \frac{D_1}{2}\bar{K}
\end{equation}
In \eqref{eq45}, the last equality holds upon choosing
\begin{equation}
\epsilon=\bar\epsilon:=\left(2\Ra^2c_P(2\chi^2+1+c_P^2)\right)^{-1}
\end{equation}
and, consequently,
\begin{equation}\label{eq49}
    \bar{K}=K(\bar\epsilon):= \Gamma \widehat{M} P_0 \left(\frac{1}{P_0\widehat{m}}+1\right)\frac{1}{\lambda_2} + \frac{2}{\bar\epsilon P_0c_P\lambda_1\widehat{m}} + \frac{2}{\bar\epsilon\lambda_1}
\end{equation}
By defining in \eqref{eq45}
\begin{equation}\label{eq50}
    b= \frac{1-m}{\bar{K}}
\end{equation}
and $D_3$ as in Eq. \eqref{d3}, then $bI_2\leq  D_3$.

\end{proof}

From the previous results, inequality \eqref{ineq1} turns into:
\begin{equation}\label{ineq2}
    \frac{dE}{dt} \leq -D_3+N_1+bN_2
\end{equation}

The nonlinear terms $N_1, N_2$ can be controlled as shown in Appendix \ref{appendix2}. Hence, the following inequality holds
\begin{equation}\label{ineq3}
    N_1+bN_2\leq A_0 D_3 \sqrt{E}
\end{equation}
where $A_0$ is a constant explicitly determined in Appendix \ref{appendix2}.

\subsection{Exponential decay of the energy}

From \eqref{eq49}-\eqref{eq50}, the evolution equation leads to the following inequality: 
\begin{equation}\label{energy_final}
    \frac{dE}{dt}\leq -D_3(1-A_0\sqrt{E})
\end{equation}
This estimate allows us to write the following conditional nonlinear stability theorem

\begin{thm}
    If $0<m<1$ and $E(0)<A_0^{-2}$, with $A_0$ given in Eq. \eqref{a0}, then from Eq. \eqref{energy_final}
    \begin{equation}\label{expdecay}
    E(t)\leq E(0) \,\mathrm{exp}\{-\Gamma_1(1-A_0\sqrt{E(0)})t\}
\end{equation}
with $\Gamma_1$ positive constant. Consequently, the motionless steady state $m_b$ is nonlinearly stable.
\end{thm}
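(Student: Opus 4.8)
The plan is to obtain the theorem from the differential inequality \eqref{energy_final} by a standard recursive (Gronwall-type) argument, the only preliminary fact to be established being the coercivity of the dissipation $D_3$ with respect to the Lyapunov functional $E$, i.e.\ the existence of a constant $\Gamma_1>0$ with $D_3\geq\Gamma_1 E$. I would prove this by comparing $E_1$ with $D_1$ and $E_2$ with $D_2$ term by term: since $w,\varphi,\Delta_1\theta,\omega,\textbf{u},\Delta\theta$ all vanish on $z=0,1$ by \eqref{bcNL} (note $\Delta\theta=\Delta_1\theta+\theta_{,zz}$ vanishes there too) and are periodic in $x,y$, the Poincar\'e inequality gives $\|\nabla w\|^2\leq c_P\|\Delta w\|^2$, $\|\varphi\|^2\leq c_P\|\nabla\varphi\|^2$, $\|\Delta_1\theta\|^2\leq c_P\|\nabla\Delta_1\theta\|^2$, $\|\omega\|^2\leq c_P\|\nabla\omega\|^2$, $\|\textbf{u}\|^2\leq c_P\|\nabla\textbf{u}\|^2$ and $\|\Delta\theta\|^2\leq c_P\|\nabla\Delta\theta\|^2$. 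Dropping the non-negative $f$-weighted contributions in $D_1,D_2$, every summand of $E_1$ (resp.\ $E_2$) is then bounded by a fixed multiple of a summand of $D_1$ (resp.\ $D_2$), whence $E_1\leq C_1 D_1$ and $E_2\leq C_2 D_2$ for some $C_1,C_2>0$ depending only on $c_P$, $\Pr$ and the coupling parameters. Since $m<1$ gives $b=(1-m)/\bar K>0$, it follows that
\begin{equation}\label{coerc_plan}
    D_3=\frac{1-m}{2}D_1+\frac{b}{2}D_2\geq \Gamma_1\,(E_1+bE_2)=\Gamma_1 E,\qquad \Gamma_1:=\min\Bigl\{\tfrac{1-m}{2C_1},\tfrac{1}{2C_2}\Bigr\}>0 .
\end{equation}

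Next I would insert \eqref{coerc_plan} into \eqref{energy_final}: on any time interval where $E(t)<A_0^{-2}$ the factor $1-A_0\sqrt{E}$ is positive, so
\begin{equation}\label{scalar_plan}
    \frac{dE}{dt}\leq -\Gamma_1\,E\bigl(1-A_0\sqrt{E}\,\bigr)\leq 0 .
\end{equation}
The key structural step is then to show that $\{E<A_0^{-2}\}$ is positively invariant whenever $E(0)<A_0^{-2}$. I would argue by continuity: letting $t^\ast=\sup\{t\geq0:E(s)\leq E(0)\ \forall s\in[0,t]\}$, on $[0,t^\ast)$ one has $E\leq E(0)<A_0^{-2}$, hence by \eqref{scalar_plan} $E$ is non-increasing there and $E(t^\ast)\leq E(0)$, with strict inequality if $E(0)>0$ (since the right-hand side of \eqref{scalar_plan} is then strictly negative as long as $E>0$); by continuity $E$ would remain below $E(0)$ on a larger interval, contradicting the maximality of $t^\ast$ unless $t^\ast=+\infty$ (the case $E(0)=0$ forces $E\equiv0$ and is trivial). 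Therefore $E(t)\leq E(0)<A_0^{-2}$ for all $t\geq0$, and consequently $1-A_0\sqrt{E(t)}\geq 1-A_0\sqrt{E(0)}>0$ uniformly in $t$.

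Feeding this uniform bound back into \eqref{scalar_plan} gives $\dot E\leq-\Gamma_1\,(1-A_0\sqrt{E(0)})\,E$, and Gronwall's lemma yields \eqref{expdecay}. Since $E$ dominates all the perturbation norms appearing in $E_1$ and $E_2$ — in particular $\|\nabla w\|$, $\|\varphi\|$, $\|\Delta_1\theta\|$, $\|\omega\|$, $\|\nabla\textbf{u}\|$, $\|\Delta\theta\|$ — the exponential decay of $E(t)$ gives the claimed conditional nonlinear stability of $m_b$. I expect the only genuine obstacle to be the coercivity step: one must verify that \emph{every} term of the Lyapunov functional is indeed controlled, after Poincar\'e, by a term that survives in $D_3$ (the delicate matches being $\|\textbf{u}\|^2,\|\omega\|^2$ in $E_2$ against $\|\nabla\textbf{u}\|^2,\|\nabla\omega\|^2$, and $\mu\Pr\|\Delta_1\theta\|^2$ in $E_1$ against $\mu\|\nabla\Delta_1\theta\|^2$), and one must track how $\Gamma_1$ depends on $m$ so that $\Gamma_1>0$ precisely when $m<1$.
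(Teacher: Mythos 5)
Your proposal is correct and follows essentially the same route as the paper: Poincar\'e-type coercivity $D_3\geq\Gamma_1 E$, positive invariance of $\{E<A_0^{-2}\}$ by a continuity/recursive argument, and Gronwall's lemma, with the paper packaging the constants as $\xi=\max\{\Da/(2\widehat{m}),c_P/2,\Pr c_P/2\}$ and $\Gamma_1=(2\xi/(1-m)+2\xi)^{-1}$. The only minor divergence is at the one delicate match you flag yourself: since $\omega$ and $\textbf{u}$ do not vanish on $z=0,1$, the paper controls $\|\omega\|^2$ and $\|\textbf{u}\|^2$ by \emph{keeping} the $f$-weighted Darcy terms of $D_1,D_2$ (whence the $\Da/(2\widehat{m})$ in $\xi$) rather than dropping them and invoking Poincar\'e as you do, which would otherwise require a zero-mean hypothesis on the periodicity cell.
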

\begin{proof}
    By applying the Poincar\'e inequality to $E_1$ and $E_2$, it turns out that
\begin{equation}
    \begin{aligned}
        & E_1\leq \xi D_1 \\
        & E_2\leq \xi D_2 
    \end{aligned}
\qquad \xi=\max\left\{ \frac{\Da}{2\widehat{m}}, \frac{c_P}{2},\frac{\Pr c_P}{2}\right\}
\end{equation}
Therefore, 
\begin{equation}\label{th2}
    E=E_1+bE_2 \leq \xi D_1 + b\xi D_2 \leq D_3\left(\frac{2\xi}{(1-m)} + 2\xi\right)
\end{equation}

Moreover, condition $E(0)<A_0^{-2}$ and Eq. \eqref{energy_final} ensure that 
\begin{equation}
    \frac{dE(0)}{dt}<0
\end{equation}
In a neighborhood of $t=0$, the energy is still decreasing in time. As a consequence, from a recursive argument, it follows that
\begin{equation}\label{th1}
    \frac{dE}{dt} \leq -D_3\left(1-A_0\sqrt{E(0)}\right) \qquad \forall t\geq 0
\end{equation}

Hence, from Eqs. \eqref{th2}-\eqref{th1}, we get
\begin{equation}
    \frac{dE}{dt}\leq -\Gamma_1 E\left(1-A_0\sqrt{E(0)}\right)
\end{equation}
where 
\begin{equation}
    \Gamma_1=\left(\frac{2\xi}{(1-m)} + 2\xi\right)^{-1}
\end{equation}
from which Eq. \eqref{expdecay} follows upon applying Gronwall's inequality.

\end{proof}

\subsection{The Euler-Lagrange equations}
In the present section, the variational problem Eq. \eqref{varprob} is presented and solved, and  the critical Rayleigh number for nonlinear stability of the conduction solution is determined.   
The Euler-Lagrange equations for Eq. \eqref{varprob} are
\begin{equation}\label{el}
    \begin{cases}
        -\lambda_1 \varphi +q\Ra^2 \Delta_1\varphi -\mathcal{T}^2 \varphi_{,zz} + \mu \Ra \Delta_1\Delta_1\theta -\lambda_2\mathcal{T}\omega_{,z} -2\lambda_1 \Delta\Delta w + 2f^\prime \lambda_1 P_0 w_{,z} + 2 f \lambda_1 P_0 \Delta w=0\\
        -\lambda_1 w + q \Ra^2 \Delta_1 w -\mathcal{T}^2 w_{,zz} -\Ra(1-q) \Delta\Delta_1\theta +\mathcal{T}P_0 (f \omega_{,z} + f^\prime \omega) +2\Delta \varphi=0\\
        -\Ra(1-q)\Delta\Delta_1 \varphi +\mu \Ra \Delta_1\Delta_1 w +2\mu \Delta\Delta_1\Delta_1 \theta=0\\
        -\mathcal{T}P_0 f\varphi_{,z} +\lambda_2 \mathcal{T}w_{,z} +2\lambda_2 \Delta\omega -2\lambda_2 P_0 f \omega=0
    \end{cases}
\end{equation}
Following a standard analysis with normal modes solutions Eq. \eqref{visc_expansion}, system \eqref{el} can be written as follows
\begin{equation}\label{el2}
    \begin{cases}
        -\lambda_1 \varphi -q\Ra^2 k^2\varphi -\mathcal{T}^2 D^2\varphi + \mu \Ra k^4\theta -\lambda_2\mathcal{T}D\omega -2\lambda_1 (D^2-k^2)^2 w + 2f^\prime \lambda_1 P_0 Dw + 2 f \lambda_1 P_0 (D^2-k^2) w=0\\
        -\lambda_1 w - q \Ra^2 k^2 w -\mathcal{T}^2 D^2w +\Ra(1-q) k^2(D^2-k^2)\theta +\mathcal{T}P_0 (f D\omega + f^\prime \omega) +2(D^2-k^2) \varphi=0\\
        \Ra(1-q)(D^2-k^2) \varphi +\mu \Ra k^2 w +2\mu k^2(D^2-k^2) \theta=0\\
        -\mathcal{T}P_0 fD\varphi +\lambda_2 \mathcal{T}Dw +2\lambda_2 (D^2-k^2)\omega -2\lambda_2 P_0 f \omega=0
    \end{cases}
\end{equation}

The critical Rayleigh number will be provided by solving the following problem
\begin{equation}\label{maxprob}
\Ra_{E}^2=\max_{\lambda_1,\lambda_2,\mu>0}\min_{k^2\in\mathbb{R}^+}\Ra^2(\lambda_1,\lambda_2,\mu,k^2)
\end{equation}
The difficulty of solving this problem lies in the fact that we cannot recover an analytical expression for $\Ra(\lambda_1,\lambda_2,\mu,k^2)$. 
Therefore, in order to solve the minmax problem, we employed a combination of the Chebyshev-$\tau$ method and the golden section method. 

The latter is classified as numerical optimization gradient-free scheme, because it does not require information about the derivatives of the objective function. 
The idea of this method consists of recursively eliminating a portion of the region where the minimum has been searched. At each iteration, the region is reduced by exactly $38.2\%$, which is higher than many other methods, according to \cite{extensiongolden}. This is consequence of the choice of the so-called golden number $\tau$ which comes from solving $\tau^2=1-\tau$ and is equal to $0.618$. Moreover, at each iteration, the method involves only $2^n$ function evaluations, where $n$ is the number of independent variables, in our case $n=3$ for the maximum problem and $n=1$ for the minimum one. For the problem Eq. \eqref{maxprob}, each function evaluation corresponds to the determination of the Rayleigh number through a single run of the Chebyshev-$\tau$ method. 

\section{Numerical results}\label{sec4}
The present section addresses the numerical resolution of differential eigenvalue problems arising from the linear and nonlinear stability analyses via the numerical methods proposed so far.
In particular, it is discussed the influence on the onset of instability of two competing effects such as variable viscosity and rotation. 

In \cite{CaponeGentile2000}, the authors investigated the onset of instability in a uniformly rotating porous medium and proved analytically the validity of the principle of exchange of stabilities when $\Pr\geq 1$. 
This result led us to analyse the effect of viscosity variations on the critical thresholds only in the diffusive regime for $\Pr\geq 1$, with the aim of understanding whether the principle is still satisfied or not and give an insight about the interplay between two competing effects, namely variable viscosity and rotation.

First of all, in a case where the fluid viscosity is constant, i.e. $\Gamma=0$, one would expect to recover the analytical threshold found in \cite{CaponeGentile2000}, i.e. 
\begin{equation}
    \Ra^2_L = \min_{k^2} \frac{(\pi^2 +k^2) \left[(\pi^2+k^2)(\pi^2+k^2+P_0)^2 + \pi^2\mathcal{T}^2\right]}{k^2(\pi^2+k^2+P_0)} 
\end{equation}

In Table \ref{tab2} it is displayed a comparison between the critical thresholds coming from the present linear and nonlinear stability analyses and the results obtained in \cite{CaponeGentile2000}. Coincidence of results is evident and proves the validity of numerical methods employed in the present paper.
\begin{table}[h!]\centering
 \scriptsize
		\begin{tabular}{@{}ll|llll|llll|lll@{}}\toprule
	\multicolumn{3}{c}{$\mathcal{T}=1$} & \phantom{c}&\multicolumn{3}{c}{$\mathcal{T}=10$} &  \phantom{c}&\multicolumn{3}{c}{$\mathcal{T}=5$}&\phantom{c}& \\
			\cmidrule{1-3} \cmidrule{5-7} \cmidrule{9-11}
	$\Ra_L \cite{CaponeGentile2000} $ &	$\Ra_{E,c}$&$ \Ra_{_L} $&& $\Ra_L \cite{CaponeGentile2000}$ &	$\Ra_{E,c}$&$ \Ra_{_L} $  && $\Ra_L \cite{CaponeGentile2000}$ &	$\Ra_{E,c}$&$ \Ra_{_L} $ &&$ P_0 $\\ \midrule
		 26.1066&  26.1061 & 26.1057 & &29.0253& 29.0033 & 29.0250 && 26.9267&  26.9340& 26.9257 && 0.5 \\
          26.5236&  26.5168 & 26.5236 & &29.3068& 29.2969 & 29.3061 && 27.2978& 27.3153 & 27.2974 && 1 \\
          29.5984&  29.5775 & 29.5974 & &31.5758& 31.5751 & 31.5750 &&30.1194& 30.1182 & 30.1187 && 5 \\
			\bottomrule
		\end{tabular}
		\caption{Comparison with results known in literature from \cite{CaponeGentile2000} for quoted values of $\mathcal{T}$ and $P_0$, with $\Gamma=0$ and $\Pr=2$.}
		\label{tab2}
	\end{table}

\begin{figure}
    \centering
    \includegraphics[scale=0.7]{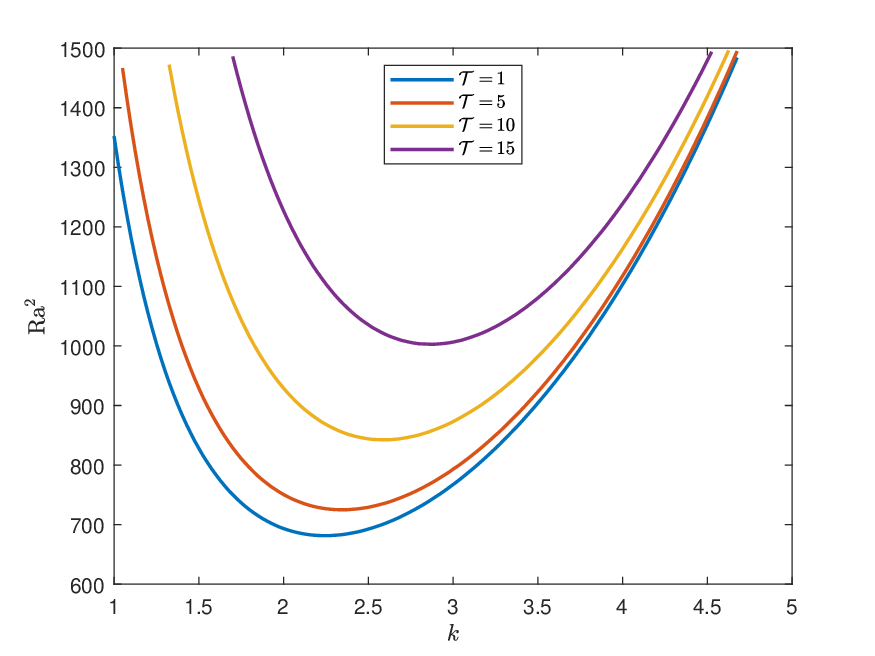}
    \caption{$\Da=2, \Pr=2, \Gamma=0$}
    \label{fig1}
\end{figure}

\begin{figure}
    \centering
    \includegraphics[scale=0.3]{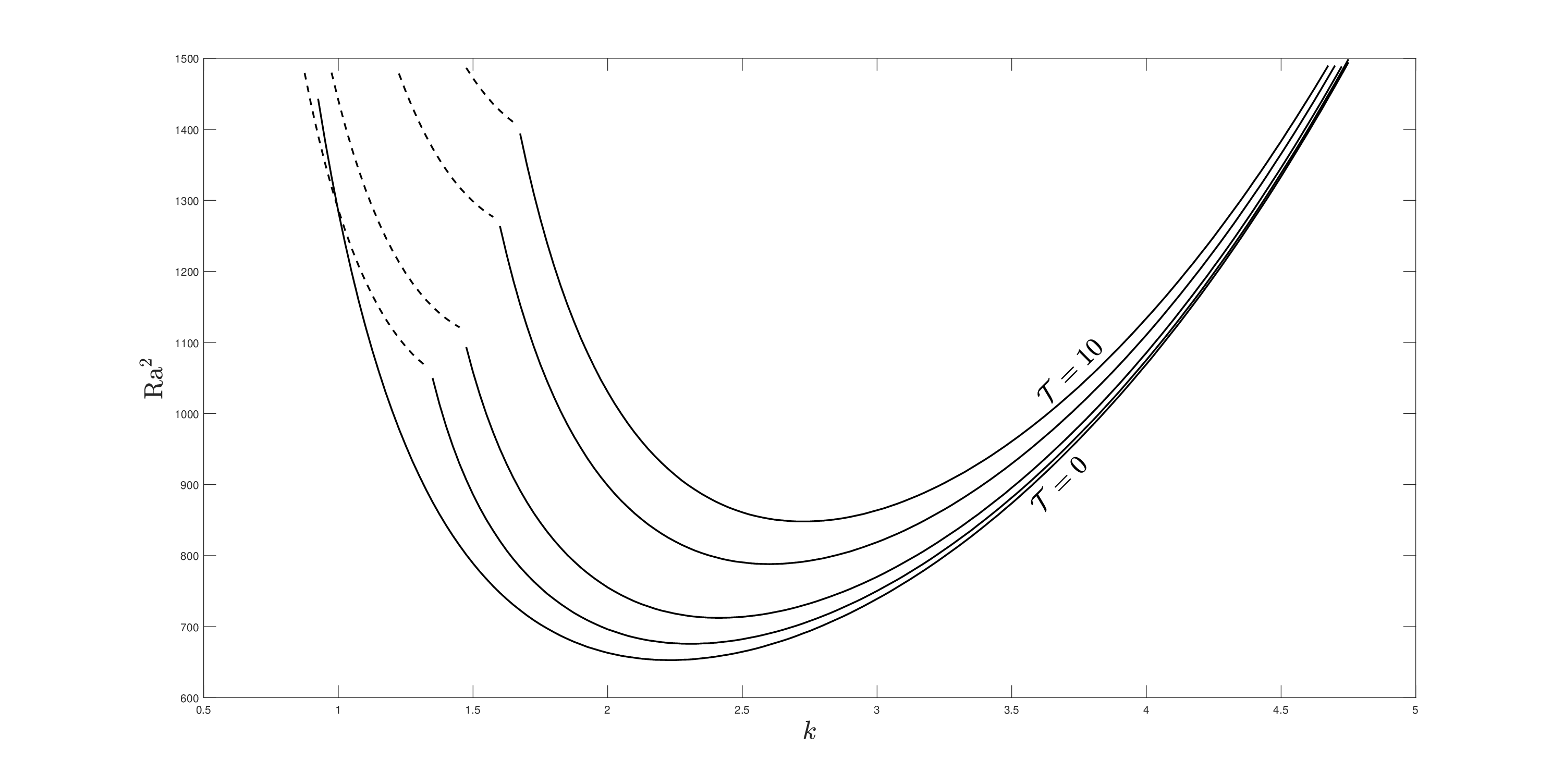}
    \caption{$\Da=2, \Pr=2, \Gamma=1.5$}
    \label{fig2}
\end{figure}

In order to understand the difference of the results between the two papers, in terms of the reality of leading eigenvalues of the respective differential operators, 
Figures \ref{fig1}-\ref{fig2} have been reported. Both describe the behaviour of neutral curves with respect to variations of the Taylor number $\mathcal{T}$. The former is useful to visualise the results from \cite{CaponeGentile2000}, computed from \eqref{modad2}-\eqref{bc5} imposing $\Gamma=0$, while the latter displays results from the present paper for $\Gamma=\frac{3}{2}$.

According to Figure \ref{fig1}, all the eigenvalues are real and only stationary convection can occur. Therefore, oscillating motions are not present when $\Gamma=0$, confirming the results about the validity of the principle of the exchange of stabilities obtained in \cite{CaponeGentile2000}. 

On the other hand, in Figure \ref{fig2}, the existence of modes that may lead to occurrence of travelling waves within the fluid is displayed. This is direct consequence of the fact that pure imaginary eigenvalues occur at the onset of instability.
Comparison between the two pictures shows that the assumption of non-constant viscosity implies that the principle of exchange of stabilities does not hold, as long as $\Gamma$ is large enough. This result is remarkable if compared to what is available in literature (see for example \cite{torrance, straughan1986, kassoy1975, acta2022, joseph, rajagopal2009stability}), where to the best of our knowledge, the assumption of variable viscosity does not impact on the existence of new oscillatory modes. 
Another aspect emerging from Figure \ref{fig2} regards the stabilising effect of rotation on the onset of instability. It is evident that the critical Rayleigh number increases for increasing $\mathcal{T}$. This mechanism is well-known in literature and it has been already presented in the introduction. 
Moreover, also the critical point where neutral curves for steady and oscillatory convection meet is moving right, meaning that the region of modes that would produce oscillating motions is increasing, even though to obtain oscillatory convection a higher Rayleigh number is required.
The reader may also notice that all the neutral curves collapse for modes whose wavelength is small, meaning that for those modes the effect of increasing rotation is less remarkable.

\begin{figure}
    \centering
    \includegraphics[scale=0.7]{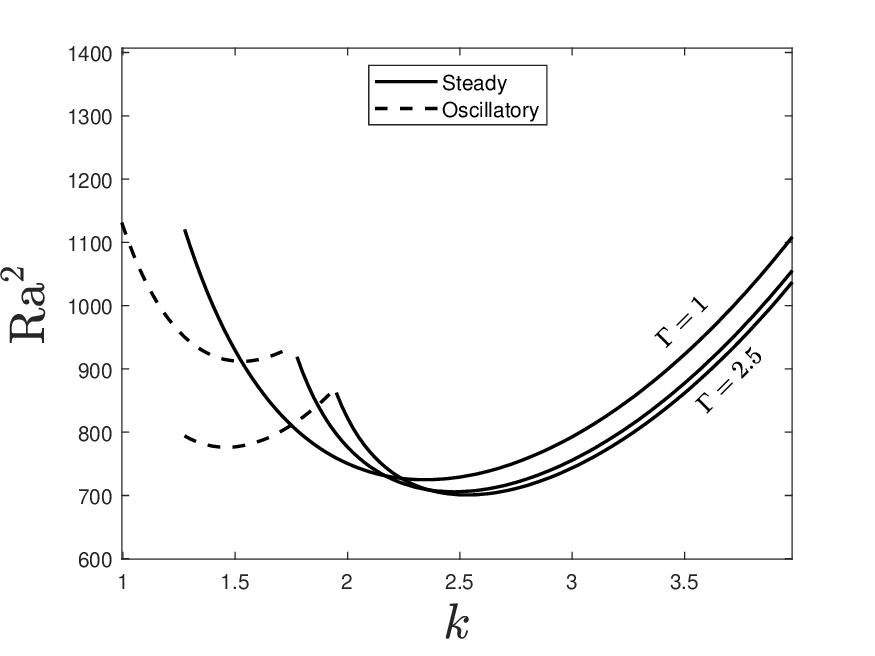}
    \caption{$\Da=2, \Pr=2, \mathcal{T}=5$}
    \label{fig3}
\end{figure}

\begin{figure}
    \centering
    \includegraphics[scale=0.3]{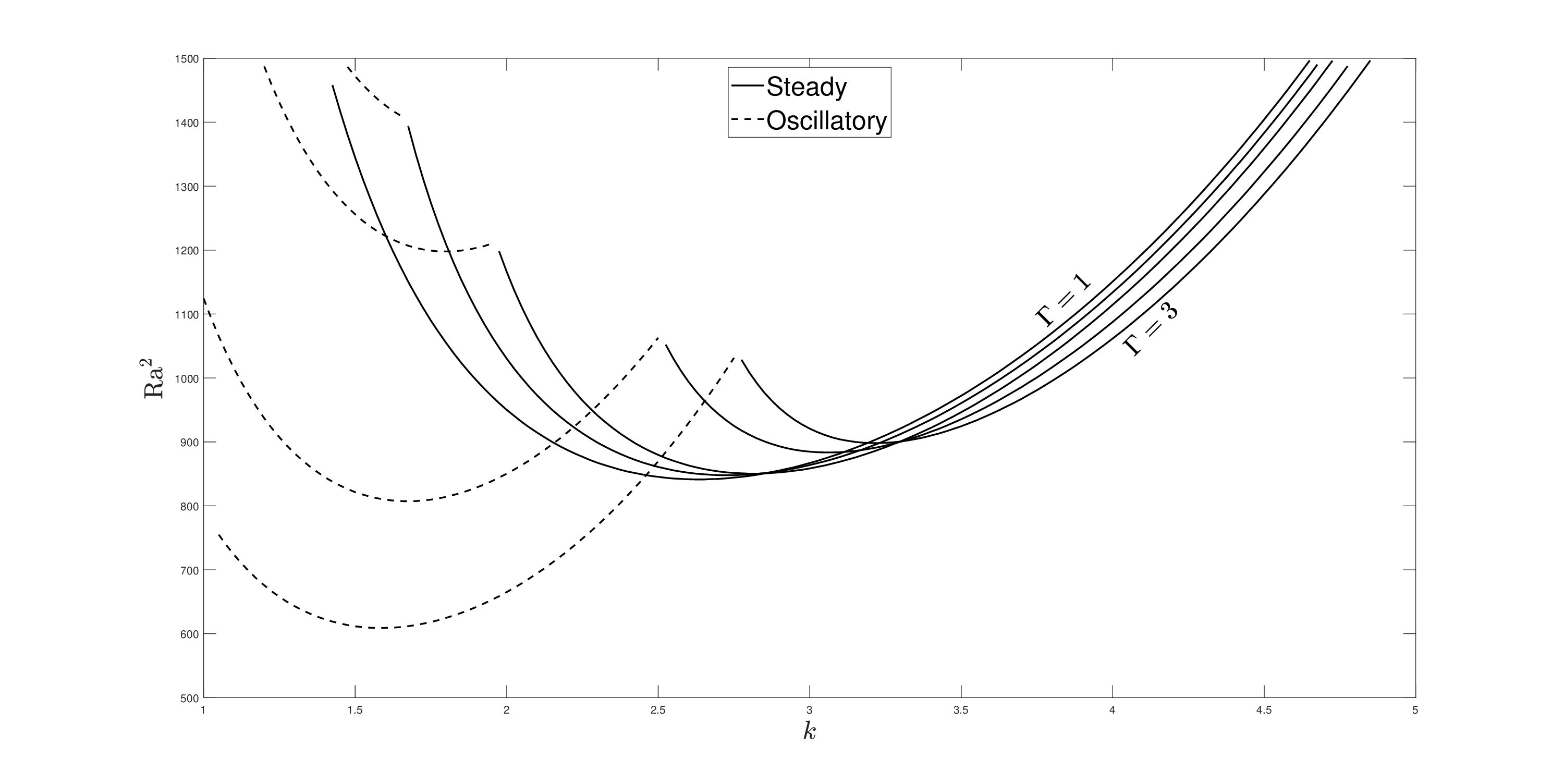}
    \caption{$\Da=2, \Pr=2, \mathcal{T}=10$}
    \label{fig4}
\end{figure}

Figures \ref{fig3}-\ref{fig4} show the behaviour of neutral curves with respect to increasing $\Gamma$, i.e. the gradient of viscosity, for two different speeds of rotation, i.e. $\mathcal{T}=5,10$. 
This choice is based on the idea of checking the behaviour of neutral curves in a regime at low and high speed of rotation. While the choice of physically reasonable values of $\Gamma$ is based on experimental data. For instance, for values of $\gamma$ reported in Table \ref{tab1}, if the temperature difference between top and bottom plane is $40^\circ C$, then $\Gamma=40\times0.065=2.6$.

In Figure \ref{fig3}, it is evident that both oscillatory and steady convection may be triggered by certain modes, even though absence of oscillating motions for $\Gamma=1$ is reported.
The decreasing behaviour of the Rayleigh number for increasing $\Gamma$ is preserved and known in literature. In this sense, the rate of rotation chosen does not affect the impact of variable viscosity.

Opposite trend is displayed in Figure \ref{fig4}, where $\mathcal{T}=10$. Indeed, the critical Rayleigh number is increasing for increasing $\Gamma$. This trend can be physically explained because greater $\Gamma$ implies that the fluid struggles less to move at the bottom of the layer, in the proximity of the hot plane. A speed of rotation large enough takes advantage of the fluid freedom pushing particles outward in the horizontal direction. Therefore, the interplay between rotation and viscosity results in a delay of the onset of convection.
Interestingly, the usual trend is preserved for small-wavelength perturbations, whose critical Rayleigh number is decreasing for increasing $\Gamma$. Moreover, as in Figure \ref{fig2}, oscillating motions are not present for small enough values of $\Gamma$, and the number of modes that triggers oscillating motions is larger for increasing $\Gamma$. 

In Table \ref{tab2} comparison between critical Rayleigh numbers from linear and nonlinear stability analysis is reported. For sufficiently low values of Taylor number and gradient of viscosity, difference between thresholds falls within tolerance and numerical error due to approximations in the numerical schemes employed. Whereas, increasing either $\mathcal{T}$ or $\Gamma$ results in a wider gap between stability thresholds, but still of the order of $3\%$.

\begin{table}[h!]\centering
 \scriptsize
		\begin{tabular}{@{}l|lll|lll|lll@{}}\toprule
	\multicolumn{2}{c}{$\mathcal{T}=1$} & \phantom{c}&\multicolumn{2}{c}{$\mathcal{T}=5$} &  \phantom{c}&\multicolumn{2}{c}{$\mathcal{T}=10$}&\phantom{c}& \\
			\cmidrule{1-2} \cmidrule{4-5} \cmidrule{7-8}
	$\Ra_{E,c}$&$ \Ra_{_L} $&& 	$\Ra_{E,c}$&$ \Ra_{_L} $  && 	$\Ra_{E,c}$&$ \Ra_{_L} $ &&$ \Gamma $\\ \midrule
		 26.0367 & 26.0486 & & 26.8661 & 26.8924 &&   28.9858 & 29.0261 && 0.5 \\
          25.8492&  25.9049 && 26.6997  & 26.8024 && 28.8328 &  29.0086 & & 1 \\
          25.0790&  25.2824 && 25.9583  & 26.5698 && 28.1340 & 29.1636 && 2 \\
			\bottomrule
		\end{tabular}
		\caption{Comparison of results from linear and nonlinear analysis for quoted values of $\mathcal{T}$ and $\Gamma$, with $P_0=0.5$ and $\Pr=2$.}
		\label{tab2}
	\end{table}

\section{Conclusions}
Our investigation aimed at providing new results on the interplay between two counter-acting effects in a fluid-saturated porous medium heated from below. In this paper, a variable viscosity fluid is considered within a uniformly rotating porous medium. Uniform rotation about the vertical axis is well-known to inhibit the onset of thermal convection while temperature-dependent fluid viscosity has opposite consequences.
From linear stability analysis, it is shown that while the effect of increasing rotation rate produces higher critical Rayleigh numbers, increasing the gradient of viscosity may lead to different trends of the critical linear threshold depending on the rate of rotation considered. More importantly, it has been proved numerically that considering a fluid with non-zero viscosity gradient may lead to the occurrence of convective patterns oscillating in time. This result is remarkable in the regime $\Pr\geq 1$, where, as proved in \cite{CaponeGentile2000}, in absence of variations of viscosity with temperature the onset of convection is always steady. Moreover, a consistent nonlinear stability analysis has been developed via the energy method. The choice of a suitable Lyapunov functional is very tricky in the context of rotating porous media, but in this case it has been possible to determine an energy functional that could lead to proximity between linear and nonlinear stability thresholds and exponential decay in time of the perturbations at low Rayleigh numbers, provided that the energy satisfies a constraint on the initial datum.

\appendix
\section{Appendix}
 \subsection{Numerical procedure}\label{appendix1}

The differential eigenvalue problem Eqs. \eqref{modad2}-\eqref{bc5} has been solved via the Chebyshev-$\tau$ method. The present section is devoted to provide some hints about the implementation of the method.

First step involves the introduction of an approximation of the unknown fields that are then written as truncated series of Chebyshev polynomials. Therefore, let $T_k\in L^2(-1,1)$ be the $k$-th Chebyshev polynomial, with $k\in\N_0$. 
Given that these polynomials are defined in a different interval compared to the physical domain of the problem at stake, transformation of the domain from $(0,1)$ to $(-1,1)$ is required. Now, solution of Eq. \eqref{modad2} can be expanded as truncated series:
\begin{equation}\label{expansion2}
 w= \sum_{k=0}^{N+2} W_k T_k(z), \quad \Phi = \sum_{k=0}^{N+2} \Phi_k T_k(z), \quad \Theta = \sum_{k=0}^{N+2} \Theta_k T_k(z), \quad \omega = \sum_{k=0}^{N+2} \omega_k T_k(z),
\end{equation}

The truncation introduces an error. Therefore, according to \cite{dongarra}, rather than solving the  differential problem 
\begin{equation}
    L(w,\Phi,\theta,\omega)=0
\end{equation}
where $L$ is the linear differential operator in Eq. \eqref{modad2}, the following system has to be solved:
\begin{equation}
    L(w,\Phi,\theta,\omega)=\sum_{i=1}^{n_b}\tau_i T_{N+i}
\end{equation}
where $\tau_i$ is a measure of the truncation error and $n_b$ is the total number of boundary conditions \cite{lanczos1988}.

It is well-known that Chebyshev polynomials are orthogonal with respect to the scalar product 
\begin{equation}
 \langle f, g\rangle := \int_{-1}^{1} \frac{f g}{\sqrt{1-z^2}} d z \qquad f, g \in L^2(-1,1)
\end{equation}
Therefore, by scalar multiplication of the resulting equations obtained upon substituting Eq. \eqref{expansion2} into \eqref{modad2},
 with $T_k$ for $k=0,\dots, N$, system \eqref{modad2}-\eqref{bc5} reduces to an algebraic eigenvalue problem that can be solved with a simple Matlab routine: 
\begin{equation}\label{modad3}
\begin{cases}
    0 =(4D^2-k^2) w -\Phi\\
    \sigma F\Phi = F(4D^2-k^2)\Phi -P_0\Phi -P_0 \Gamma 2Dw-\Ra F k^2 \theta - \mathcal{T} F 2D\omega\\
    \Pr\sigma \theta =\Ra w + (4D^2-k^2)\theta\\
    \sigma F \omega = F(4D^2-k^2) \omega -P_0 \omega +\mathcal{T} F 2D w    
\end{cases} 
\end{equation}
In Eq. \eqref{modad3}, $F$ is a matrix and represents the approximation of $f^{-1}(z)=\text{exp}(-\Gamma(z-1/2))$ via Chebyshev polynomials. The structure of this matrix is therefore recursive and reduces to the identity matrix when $\Gamma=0$. Details about the determination of such a matrix are reported in \cite{arnone2023chebyshev, acta2022}.

As a result, a generalised eigenvalue problem of this kind is obtained
\begin{equation}\label{eigprob}
    \mathcal{A} \textbf{X} = \sigma \mathcal{B} \textbf{X}
\end{equation}
where $\mathcal{A}, \mathcal{B} \in \mathcal{M}^{[4(N+1)+8]\times [4(N+1)+8]}$ are completed by adding eight equations that come from the boundary conditions. Hence, 
\begin{equation}
    \mathcal{A}= \begin{pmatrix}
        4D^2-k^2I & -I & \bm{0}&\bm{0}\\
        \text{b.c.} & 0\dots0 &  0\dots0 & 0\dots0\\
        -2\Gamma P_0 D & F(4D^2-k^2)-P_0I & -\Ra Fk^2 & -2\mathcal{T}FD\\
        0\dots0&\text{b.c.}&0\dots0&0\dots0\\
        \Ra I & \bm{0} & 4D^2-k^2I & \bm{0}\\
        0\dots0&0\dots0&\text{b.c.}&0\dots0\\
        2\mathcal{T}FD&\bm{0} & \bm{0} & F(4D^2-k^2I)-P_0I\\
        0\dots0&0\dots0&0\dots0&\text{b.c.}
    \end{pmatrix}
\end{equation}
where $\bm{0}\in\mathcal{M}^{N+1,N+1}$ and
\begin{equation}
    \mathcal{B}= \begin{pmatrix}
        \bm{0} &\bm{0} & \bm{0} & \bm{0}\\
        0\dots0&0\dots0&0\dots0&0\dots0\\
        \bm{0} & F & \bm{0}& \bm{0}\\
        0\dots0&0\dots0&0\dots0&0\dots0\\
        \bm{0} & \bm{0} & \Pr I& \bm{0}\\
        0\dots0&0\dots0&0\dots0&0\dots0\\
        \bm{0} & \bm{0} &  \bm{0} & F\\ 
       0\dots0&0\dots0&0\dots0&0\dots0\\
    \end{pmatrix}
\end{equation}

As the reader can notice, $\mathcal{B}$ is singular, mainly because of the boundary conditions equations. Consequently, spurious eigenvalues can be detected. This problem has been frequently addressed in literature, see \cite{bourne2003,  gheorghiu2014spectral, mcfadden}. 
Some details about how to fix this issue are reported in \cite{arnone2023onset} and references therein.

The leading eigenvalue related to instability can be safely determined, once some algebraic manipulations in matrices $\mathcal{A}, \mathcal{B}$ are completed.

\subsection{Control of nonlinearities}\label{appendix2}
In this section, functional analysis tools are employed in order to control nonlinear terms $N_1$, $N_2$ in \eqref{ineq2}.
These terms are controlled with the product between $E$ and $D_3$, to obtain inequality \eqref{ineq3}.

Let us introduce some preliminary inequalities that will help us in estimating the nonlinear terms in the evolution equation. From the Sobolev embedding inequalities, there exists a positive constant $C$ depending on the periodicity cell $V$ such that (see \cite{adams})
\begin{equation}
    \sup_V|\textbf{u}|\leq C \|\Delta\textbf{u}\| \qquad \sup_V|\nabla\theta|\leq C\|\nabla\Delta\theta\| \qquad \sup_V|\theta|\leq C\|\Delta\theta\|
\end{equation}
We refer the reader to \cite{mulone1997} for a more detailed discussion about the validity of these inequalities and the definition of the constant $C$.

Moreover, in the following many well-known inequalities such as Poincar\'e inequality or Cauchy-Schwartz inequality are employed. In \cite{RioneroMulone1987}, we found the following helpful estimate is provided
\begin{equation}
    \|w_{x^ix^j}\|\leq \|\Delta w\|
\end{equation}
which originally has been proved in \cite{adams}.

Given that
\begin{equation}
\begin{aligned}
    & \nabla\times\nabla\times(f(z) \theta \textbf{u})\cdot\textbf{k} = \partial_{,z}(\nabla \cdot f\theta\textbf{u}) - \Delta (f\theta\textbf{u})=\\
    & f^{\prime}\textbf{u}\nabla_1\theta + f\textbf{u}_{,z}\nabla_1 \theta - f w_{,z}\theta_{,z} + f\textbf{u}\cdot \nabla\theta_{,z} -f^\prime\theta w_{,z} -f w \Delta_1 \theta -f \theta \Delta w -2f \nabla_1 \theta \cdot \nabla_1 w 
    \end{aligned}
\end{equation}
and
\begin{equation}
    \nabla\times( f(z) \theta \textbf{u})\cdot\textbf{k} = f \nabla_1 \theta \cdot (\textbf{u}\times \textbf{k}) + f \theta \omega
\end{equation}
$N_1$ can be rearranged and each term can be estimated as follows, upon defining 
\begin{equation}
    J=\widehat{M}C\left(\frac{2}{b}\right)^{\frac{3}{2}}
\end{equation}
recalling that $\widehat{M}:=\max_{[0,1]}f(z)$.

\newpage
Therefore,
\begin{equation}
    \begin{aligned}
    & (f^{\prime}\textbf{u}\cdot\nabla_1\theta, w) \leq \Gamma \widehat{M} \sup|\nabla\theta| \|w\| \|\textbf{u}\| \leq \Gamma c_P J D_2 \sqrt{E_2}\\
    & ( f\textbf{u}_{,z}\nabla_1 \theta ,w) \leq \widehat{M} \sup|\nabla \theta| \|\nabla\textbf{u}\| \|w\| \leq J D_2 \sqrt{E_2}\\ 
    & (- f w_{,z}\theta_{,z} ,w) \leq \widehat{M} \sup|\nabla \theta| \|\nabla w\| \|w\|\leq JD_2 \sqrt{E_2}\\ 
    & ( f\textbf{u}\cdot \nabla\theta_{,z},w) = -(f^\prime \textbf{u}\cdot \nabla\theta w+ f\textbf{u}_{,z}\cdot \nabla\theta w + f \textbf{u} \nabla\theta w_{,z} ) \\
    &\qquad\qquad\qquad\leq \sup|\nabla\theta| (\Gamma \widehat{M} \|\textbf{u}\|\|w\| + \widehat{M} \|\nabla\textbf{u}\|\|w\| + \widehat{M} \|\textbf{u}\|\|\nabla w\|)\leq J(2+\Gamma c_P)D_2 \sqrt{E_2}\\
    & ( -f^\prime\theta w_{,z},w) \leq \Gamma \widehat{M} \sup|\theta| \|\nabla w\| \|w\|\leq \Gamma c_p J\\
    & ( -f w \Delta_1 \theta,w) \leq \widehat{M} \sup|\textbf{u}| \|w\| \|\Delta_1\theta\|\leq J c_P \sqrt{\frac{1}{\chi}}\\
    & ( -f \theta \Delta w,w) \leq  \widehat{M} \sup|\theta| \|w\| \|\Delta w\| \leq J c_P \sqrt{\frac{1}{\chi}}\\
    & ( -2f \nabla_1 \theta \cdot \nabla_1 w ,w) \leq 2 \widehat{M} \sup|\nabla\theta| \|w\| \|\nabla\textbf{u}\|\leq 2J\\    
            & \begin{split}
                {\color{black}-\mu \Pr (\Delta_1(\textbf{u}\cdot \nabla\theta), \Delta_1\theta)} 
            & = \mu \Pr(\Delta_1\textbf{u}\cdot\nabla\theta + \textbf{u}\cdot \nabla\Delta_1\theta+2\nabla_1\textbf{u}\nabla\nabla_1\theta,\Delta\theta) \\
            & \leq\mu \Pr(\sup|\nabla\theta|\|\Delta\textbf{u}\|\|\Delta\theta\|+\sup|\textbf{u}|\|\nabla\Delta_1\theta\|\|\Delta\theta\|+2\sup|\nabla_1\textbf{u}|\|\Delta\theta\|\|\Delta\theta\|) \\
            & \leq \left(\frac{C}{\sqrt{\chi\Pr}} +\frac{C}{\sqrt{\chi\Pr}}+2c_P\frac{ C}{\sqrt{\chi\Pr}}\right)\mu\Pr \left(\frac{2}{b}\right)^{\frac{3}{2}}D_2\sqrt{E_2}
            \end{split}\\
        &
        \begin{split}
            {\color{black}-(\Ra\Delta_1(\textbf{u}\cdot\nabla\theta),\varphi)} & \leq\Ra (\Delta_1\textbf{u}\cdot\nabla\theta + \textbf{u}\cdot \nabla\Delta_1\theta+2\nabla_1\textbf{u}\nabla\nabla_1\theta,\varphi)\\
            & \leq \Ra(\sup|\nabla\theta|\|\Delta\textbf{u}\|\|\varphi\|+\sup|\textbf{u}|\|\nabla\Delta_1\theta\|\|\varphi\|+2\sup|\nabla_1\textbf{u}|\|\Delta\theta\|\|\varphi\| )\\
            & \leq\Ra \left(\frac{C}{\sqrt{\chi}}\frac{2\sqrt{2}}{b} +\frac{C}{\sqrt{\chi}}\frac{2\sqrt{2}}{b} + 2c_P\frac{C}{\sqrt{\chi}}\frac{2\sqrt{2}}{b} \right) D_2 \sqrt{E_1}
        \end{split}\\
        & 
        \begin{split}
            ( \partial_{,z}\nabla\times(f(z) \theta \textbf{u})\cdot\textbf{k},\varphi) & = (\partial_{,z}(f \nabla_1 \theta \cdot (\textbf{u}\times \textbf{k})) + \partial_{,z}(f \theta \omega), \varphi)\\ & \leq \widehat{M}\|\varphi\|(\Gamma  \sup|\nabla_1 \theta| 
            \|\textbf{u}\| + \sup|\textbf{u}\times \textbf{k}| \|\Delta\theta\| +  \sup|\nabla_1\theta|\|\textbf{u}_{,z}\|\\
            &\quad +\Gamma  \sup|\theta|\|\omega\|+ \sup|\theta|\|\nabla\omega\|+\sup|\nabla\theta|\|\omega\| )\\
            & \leq \widehat{M} C\frac{2}{b}\left(\frac{\Gamma}{\sqrt{P_0\widehat{m}}}+ \frac{c_P}{\sqrt{\chi}}+1+\frac{\Gamma c_P}{\sqrt{P_0\widehat{m}}}+ c_P + \frac{1}{\sqrt{P_0\widehat{m}}}\right)D_2 \sqrt{E_1}
        \end{split} \\
    & (\nabla\times( f(z) \theta \textbf{u})\cdot\textbf{k},\omega) \leq \widehat{M} \sup|\nabla \theta| \|\omega\| \|\textbf{u}\| + \widehat{M} \sup|\theta| \|\omega\| \|\omega\|\leq\left( c_P J + c_P \frac{1}{\sqrt{P_0\widehat{m}}}J\right)D_2\sqrt{E_2}\\
    \end{aligned}
\end{equation}
and $N_2$
\begin{equation}
    \begin{aligned}
        & \hat{\gamma}(f\theta\textbf{u},\Delta\textbf{u}) \leq \hat{\gamma}
        \widehat{M} \sup |\theta| \|\textbf{u}\| \|\Delta\textbf{u}\|\leq \hat{\gamma}J c_P\sqrt{\frac{1}{\chi}}D_2 \sqrt{E_2}\\
        & 
        \begin{split}
            {\color{black}-\Pr(\Delta(\textbf{u}\cdot\nabla\theta),\Delta\theta)} &\leq \Pr(\Delta\textbf{u}\cdot\nabla\theta+\textbf{u}\cdot\Delta\nabla\theta+2\nabla\textbf{u}\nabla\nabla\theta, \Delta\theta)
        \\
        & \leq \Pr(\sup|\nabla\theta|\|\Delta\textbf{u}\| \|\Delta\theta\|+\sup|\textbf{u}|\|\Delta\nabla\theta\|\|\Delta\theta\|+2\sup|\nabla\textbf{u}| \|\Delta\theta\|\|\Delta\theta\|)\\
        & \leq \Pr \left(\frac{c_P}{\sqrt{\chi}}+\frac{c_P}{\sqrt{\chi}}+\frac{2}{\sqrt{\Pr}}\frac{c_P}{\sqrt{\chi}}\right)\frac{J}{M}D_2 \sqrt{E_2}
        \end{split}\\
        & 
        \begin{split}
            (\nabla\times( f(z) \theta \textbf{u})\cdot\textbf{k},\Delta\omega) & = (f \nabla_1 \theta \cdot (\textbf{u}\times \textbf{k}) + f \theta \omega, \Delta \omega)\\ &\leq \widehat{M} \sup|\nabla \theta| \|\textbf{u}\| \|\Delta\omega\| + \widehat{M}\sup|\theta| \|\omega\| \|\Delta \omega\|\leq J (1+c_P)D_2\sqrt{E_2}
        \end{split}\\
        &(f\theta\textbf{u},\textbf{u})\leq \widehat{M}\sup|\theta| \|\textbf{u}\| \|\textbf{u}\|\leq c_P^2 JD_2\sqrt{E_2}\\
        & 
        \begin{split}
            (\nabla\times( f(z) \theta \textbf{u})\cdot\textbf{k},\omega) & =(f \nabla_1 \theta \cdot (\textbf{u}\times \textbf{k}) + f \theta \omega,  \omega)\\ & \leq \widehat{M} \sup|\nabla \theta| \|\textbf{u}\| \|\omega\| + \widehat{M} \sup|\theta| \|\omega\| \| \omega\|\leq  J\left(c_P+\frac{1}{\sqrt{P_0\widehat{m}}}\right)D_2\sqrt{E_2}
        \end{split}\\
     &- \chi (f^\prime \theta \textbf{u}, \nabla\Delta\textbf{u})\leq \chi \Gamma \widehat{M} \sup|\theta| \|\textbf{u}\|\|\nabla\Delta\textbf{u}\|\leq \chi c_P \Gamma J D_2 \sqrt{E_2}\\
     & -\chi (f\nabla\theta\textbf{u},\nabla\Delta\textbf{u})\leq \chi \widehat{M} \sup|\nabla\theta| \|\textbf{u}\|\|\nabla\Delta\textbf{u}\| \leq \chi  J D_2 \sqrt{E_2}\\
     & -\chi (f\theta \nabla\textbf{u}, \nabla\Delta\textbf{u})\leq \chi  \widehat{M} \sup|\theta| \|\nabla\textbf{u}\|\|\nabla\Delta\textbf{u}\|\leq\sqrt{\chi}  J D_2 \sqrt{E_2}
    \end{aligned}
\end{equation}

Finally, $A_0$ assumes the following expression:
\begin{equation}\label{a0}
\begin{split}
    A_0 & =\lambda_1 \hat{\gamma}\left(3\Gamma c_P +6 +\frac{2c_P}{\sqrt{\chi}}\right)J+ \frac{2(1+c_P)}{\chi \widehat{M}}\mu\sqrt{\Pr}J+ \frac{2(1+c_P)}{\chi \widehat{M}}\Ra\sqrt{b} J \\
    &+ \mathcal{T}\hat{\gamma}\left(\frac{\Gamma(c_P+1)+1}{\sqrt{P_0\widehat{m}}}+ \frac{c_P}{\sqrt{\chi}}+1+ c_P \right)\sqrt{\frac{2}{b}}J + \lambda_2\hat\gamma\left(\frac{1}{\sqrt{P_0\widehat{m}}}+1\right)c_PJ\\
    & +\sqrt{\chi} P_0 \hat{\gamma}c_P J+ \Pr C c_P \frac{2}{\widehat{M}\sqrt{\chi}}\left(1+\frac{1}{\sqrt{\Pr}}\right)J + P_0 \hat\gamma \left((1+c_P)^2 + \frac{1}{\sqrt{P_0\widehat{m}}}\right)J\\
    & +\chi\left(\Gamma c_P+1 + \frac{1}{\sqrt{\chi}}\right)J 
\end{split}
\end{equation}

\bibliographystyle{unsrt}
\bibliography{mybib}

\begin{thebibliography}{10}

\bibitem{CaponeGentile2000}
F.~Capone and M.~Gentile.
\newblock On the influence of rotation in convective instabilities in porous
  media.
\newblock {\em Ricerche di Matematica}, 49:193--211, 2000.

\bibitem{chandra}
S.~Chandrasekhar.
\newblock {\em Hydrodynamic and hydromagnetic stability}.
\newblock Dover Publicationas, 1981.

\bibitem{drazin}
P.~G. Drazin and W.~H. Reid.
\newblock {\em Hydrodynamic stability}.
\newblock Cambridge University Press, 2004.

\bibitem{taylor1923}
G.~I. Taylor.
\newblock Stability of a viscous liquid contained between two rotating
  cylinders.
\newblock {\em Philosophical Transactions of the Royal Society A},
  223:289--343, 1923.

\bibitem{park2013stably}
J.~Park and P.~Billant.
\newblock {The stably stratified Taylor--Couette flow is always unstable except
  for solid-body rotation}.
\newblock {\em Journal of Fluid Mechanics}, 725:262--280, 2013.

\bibitem{park2017instabilities}
J.~Park, P.~Billant, and J.~Baik.
\newblock {Instabilities and transient growth of the stratified Taylor--Couette
  flow in a Rayleigh-unstable regime}.
\newblock {\em Journal of Fluid Mechanics}, 822:80--108, 2017.

\bibitem{hughes}
D.~W. Hughes, M.~R.~E. Proctor, and I.~A. Eltayeb.
\newblock {Rapidly rotating Maxwell-Cattaneo convection}.
\newblock {\em Phys. Rev. Fluids}, 7:093502, 2022.

\bibitem{vadasz2021centrifugal}
P.~Vadasz.
\newblock Centrifugal buoyancy in a rotating fluid layer next to and distant
  from the rotation axis.
\newblock {\em Physics of Fluids}, 33(3), 2021.

\bibitem{straughan2023rotating}
B.~Straughan.
\newblock {Rotating convection in a higher gradient Navier--Stokes fluid}.
\newblock {\em The European Physical Journal Plus}, 138(7):1--13, 2023.

\bibitem{Gianfrani2023}
F.~Capone and J.~A. Gianfrani.
\newblock {Effect of Vadasz term on the onset of convection in a
  Darcy–Brinkman anisotropic rotating porous medium in LTNE}.
\newblock {\em Continuum Mechanics and Thermodynamics}, 35:1911 -- 1926, 2022.

\bibitem{Capone2020}
F.~{Capone}, R.~{De Luca}, and M.~{Gentile}.
\newblock Coriolis effect on thermal convection in a rotating bidispersive
  porous layer.
\newblock {\em Proc. R. Soc. A}, 476(2235):20190875, 2020.

\bibitem{vadasz1998}
P.~Vadasz.
\newblock Coriolis effect on gravity-driven convection in a rotating porous
  layer heated from below.
\newblock {\em Journal of Fluid Mechanics}, 376:351–375, 1998.

\bibitem{vadasz2019instability}
P.~Vadasz.
\newblock {Instability and convection in rotating porous media: A review}.
\newblock {\em Fluids}, 4(3):147, 2019.

\bibitem{gianfrani2020}
F.~Capone, M.~Gentile, and J.~A. Gianfrani.
\newblock {Optimal stability thresholds in rotating fully anisotropic porous
  medium with LTNE}.
\newblock {\em Transport in Porous Media}, 139:185--201, 2021.

\bibitem{capone2022thermal}
F.~Capone and J.~A. Gianfrani.
\newblock {Thermal convection for a Darcy-Brinkman rotating anisotropic porous
  layer in local thermal non-equilibrium}.
\newblock {\em Ricerche di Matematica}, 71(1):227--243, 2022.

\bibitem{torrance}
K.~E. Torrance and D.~L. Turcotte.
\newblock Thermal convection with large viscosity variations.
\newblock {\em Journal of Fluid Mechanics}, 47(1):113--125, 1971.

\bibitem{straughan1986}
B.~Straughan.
\newblock Stability criteria for convection with large viscosity variations.
\newblock {\em Acta Mechanica}, 61:59--72, 1986.

\bibitem{kassoy1975}
D.~R. Kassoy and A.~Zebib.
\newblock Variable viscosity effects on the onset of convection in porous
  media.
\newblock {\em The Physics of Fluids}, 18(12):1649--1651, 1975.

\bibitem{acta2022}
F.~Capone and J.~A. Gianfrani.
\newblock {Natural convection in a fluid saturating an anisotropic porous
  medium in LTNE: effect of depth-dependent viscosity}.
\newblock {\em Acta Mechanica}, 233:4535 -- 4548, 2022.

\bibitem{vafai}
K.~Vafai.
\newblock {\em Handbook of porous media}.
\newblock Crc Press, 2005.

\bibitem{NB}
D.A. Nield and A.~Bejan.
\newblock Convection in {Porous} {Media}.
\newblock {\em 5th edn. New York, NY: Springer}, 2017.

\bibitem{website2}
\url{https://wiki.anton-paar.com/it-it/olio-motore/}.

\bibitem{website1}
\url{https://www.engineeringtoolbox.com/ethylene-glycol-d_146.html}.

\bibitem{joseph}
D.~D. Joseph.
\newblock Variable viscosity effects on the flow and stability of flow in
  channels and pipes.
\newblock {\em The Physics of Fluids}, 7(11):1761--1771, 1964.

\bibitem{rajagopal2009stability}
K.~R. Rajagopal, G.~Saccomandi, and L.~Vergori.
\newblock {Stability analysis of the Rayleigh--B{\'e}nard convection for a
  fluid with temperature and pressure dependent viscosity}.
\newblock {\em Zeitschrift f{\"u}r angewandte Mathematik und Physik},
  60:739--755, 2009.

\bibitem{straughan2023nonlinear}
B.~Straughan.
\newblock {Nonlinear stability for convection with temperature dependent
  viscosity in a Navier--Stokes--Voigt fluid}.
\newblock {\em The European Physical Journal Plus}, 138(5):438, 2023.

\bibitem{dongarra}
J.~J. Dongarra, B.~Straughan, and D.~W. Walker.
\newblock {Chebyshev tau-QZ algorithm methods for calculating spectra of
  hydrodynamic stability problems}.
\newblock {\em Applied Numerical Mathematics}, 22(4):399--434, 1996.

\bibitem{arnone2023chebyshev}
G.~Arnone, J.~A. Gianfrani, and G.~Massa.
\newblock Chebyshev-$\tau$ method for certain generalized eigenvalue problems
  occurring in hydrodynamics: a concise survey.
\newblock {\em The European Physical Journal Plus}, 138(3):281, 2023.

\bibitem{bourne2003}
D.~Bourne.
\newblock {Hydrodynamic stability, the Chebyshev tau method and spurious
  eigenvalues}.
\newblock {\em Continuum Mech. Thermodyn.}, 15:571--579, 2003.

\bibitem{arnonegianfrani}
G.~Arnone, F.~Capone, and J.~A. Gianfrani.
\newblock Stability of penetrative convective currents in local thermal
  non-equilibrium.
\newblock {\em Proc. of R. Soc. A}, 480(2287):20230820, 2024.

\bibitem{straughan2013}
B.~Straughan.
\newblock {\em The energy method, stability, and nonlinear convection},
  volume~91.
\newblock Springer Science \& Business Media, 2013.

\bibitem{Galdi1985}
G.~P. Galdi and S.~Rionero.
\newblock {\em {Weighted Energy Methods in Fluid Dynamics and Elasticity}}.
\newblock Springer, 1985.

\bibitem{galdi1985exchange}
G.~P. Galdi and B.~Straughan.
\newblock Exchange of stabilities, symmetry, and nonlinear stability.
\newblock {\em Archive for rational mechanics and analysis}, 89:211--228, 1985.

\bibitem{straughan2006global}
B.~Straughan.
\newblock Global nonlinear stability in porous convection with a thermal
  non-equilibrium model.
\newblock {\em Proc. of R. Soc. A}, 462(2066):409--418, 2006.

\bibitem{barletta}
A.~{Barletta} and G.~{Mulone}.
\newblock {{The energy method analysis of the Darcy-B{\'e}nard problem with
  viscous dissipation}}.
\newblock {\em Continuum Mechanics and Thermodynamics}, 33(1):25--33, 2021.

\bibitem{gianfrani2022}
F.~Capone and J.~A. Gianfrani.
\newblock {Onset of convection in LTNE Darcy--Brinkman anisotropic porous
  layer: Cattaneo effect in the solid}.
\newblock {\em International Journal of Non-Linear Mechanics}, 139:103889,
  2022.

\bibitem{RioneroMulone1987}
G.~Mulone and S.~Rionero.
\newblock {On the nonlinear stability of the rotating B\'enard problem via the
  Lyapunov direct method}.
\newblock {\em J. of math. analysis and applications}, 144(1):109--127, 1989.

\bibitem{extensiongolden}
G.~Rani, S.~Jayan, and K.~V. Nagaraja.
\newblock {An extension of golden section algorithm for n-variable functions
  with MATLAB code}.
\newblock {\em IOP Conference Series: Materials Science and Engineering}, 577,
  2019.

\bibitem{lanczos1988}
C.~Lanczos.
\newblock {\em Applied analysis}.
\newblock Courier Corporation, 1988.

\bibitem{gheorghiu2014spectral}
C.~Gheorghiu.
\newblock {\em Spectral methods for non-standard eigenvalue problems: fluid and
  structural mechanics and beyond}.
\newblock Springer Science \& Business, 2014.

\bibitem{mcfadden}
G.~B. McFaden, B.~T. Murray, and R.~F. Boisvert.
\newblock {Elimination of spurious eigenvalues in the Chebyshev-tau spectral
  methods}.
\newblock {\em J. Comput Physics}, 91:228--239, 1989.

\bibitem{arnone2023onset}
G.~Arnone, G.~Cantini, F.~Capone, and M.~Carnevale.
\newblock The onset of penetrative convection in an inclined porous layer.
\newblock {\em Int. J. of Heat and Mass Trans.}, 216:124532, 2023.

\bibitem{adams}
R.~A. Adams and J.~Fournier.
\newblock {\em Sobolev spaces}.
\newblock Elsevier, 2003.

\bibitem{mulone1997}
G.~Mulone and S.~Rionero.
\newblock {The rotating B{\'e}nard problem: new stability results for any
  Prandtl and Taylor numbers}.
\newblock {\em Cont. Mech. and Therm.}, 9(6):347--363, 1997.

\end{thebibliography}

\end{document}